\theoremstyle{plain}
\newtheorem{theorem}{Theorem}
\newtheorem{lemma}[theorem]{Lemma}
\newtheorem{corollary}[theorem]{Corollary}
\theoremstyle{definition}
\newtheorem{remark}{Remark}
\providecommand{\R}{\ensuremath{\mathbb{R}}}
\providecommand{\Z}{\ensuremath{\mathbb{Z}}}
\providecommand{\s}{\ensuremath{\mathbb{S}}}
\renewcommand{\epsilon}{\varepsilon}
\providecommand{\dv}{\ensuremath{\mathrm{div}\xspace}}
\providecommand{\cl}{\ensuremath{\mathrm{cl}\xspace}}
\providecommand{\st}{\ensuremath{\mathrm{st}\xspace}}
\providecommand{\bp}{\ensuremath{\mathrm{bp}\xspace}}
\providecommand{\opt}{\ensuremath{\mathrm{OPT}\xspace}}
\title{Diversity maximization in doubling metrics%
}
\author{
  Alfonso Cevallos\footnote{Swiss Federal Institute of Technology (ETH), Switzerland. \texttt{alfonso.cevallos@ifor.math.ethz.ch}} 
	\quad \quad  
	Friedrich Eisenbrand\footnote{\'Ecole Polytechnique Fédérale de Lausanne (EPFL), Switzerland. \texttt{friedrich.eisenbrand@epfl.ch}} 
	\quad \quad 
	Sarah Morell\footnote{Technische Universität Berlin (TU Berlin), Germany. \texttt{morell@math.tu-berlin.de}} \\[.1cm]
}
\date{\today }
\begin{document}
\maketitle

\begin{abstract} 
Diversity maximization is an important geometric optimization problem
with many applications in recommender systems, machine learning or
search engines among others. A typical diversification problem is as
follows:  Given a finite metric space $(X,d)$ and a parameter $k ∈ ℕ$,
find a subset of $k$ elements of $X$ that has maximum diversity. There are
many functions that measure diversity. One of the most popular
measures, called \emph{remote-clique}, is the sum of the pairwise distances of the chosen
elements.  In this paper,
we present novel results on three widely used diversity measures:
Remote-clique, remote-star and remote-bipartition.

Our main result are polynomial time approximation schemes for these
three diversification problems under the assumption that the metric space is doubling. This
setting has been discussed in the recent literature. The existence of
such a PTAS however was left open.

Our results also hold in the setting where the distances are raised to a fixed power $q\geq 1$, 
giving rise to more variants of diversity functions, 
similar in spirit to the variations of clustering problems depending on the power applied to the distances. 
Finally, we provide a proof of NP-hardness for remote-clique with squared distances in doubling metric spaces.


 \end{abstract}

\section{Introduction}

A \emph{dispersion} or \emph{diversity maximization} problem is as follows: Given a ground set $X$ and a natural number $k ∈ ℕ$, find a subset $S⊆X$ among those of cardinality $k$ that maximizes a certain \emph{diversity function} $\dv(S)$.  

While diversity maximization has been of interest in the algorithms
and operations research community for some time
already, see e.g.~\cite{chandra2001approximation,birnbaum2009improved,ravi1994heuristic,hassin1997approximation},
the problem received considerable attention in the recent literature regarding
information retrieval, recommender systems, machine learning and data
mining, see e.g.~\cite{vasconcelos2003feature,vieira2011query,qin2012diversifying,radlinski2006improving,abbassi2013diversity}.

Distances used in these applications may be metric (observing the triangle inequality) or non-metric, 
with the latter typically being very challenging for the analysis of heuristics. 
However, most popular distances either are metric or correspond to the $q$-th power of metric distances for some constant $q>1$. 
The cosine distance for a set $X \subseteq \R^d \setminus \{0\}$, for example, 
is a popular non-metric measure of dissimilarity for text documents. The cosine distance between two vectors $x, y \in X \subseteq \R^d \setminus \{0\}$ 
is defined as $d(x, y) = 1 − \cos \theta_{xy}$, where $\theta_{xy}$ is the angle between $x$ and $y$. 
This distance is clearly invariant under positive scalings of $x$ and $y$, 
and it can be checked that if $x$ and $y$ are normalized so that $\|x\| = \|y\| = 1$, 
then their cosine distance is $d(x, y) = \frac{1}{2} \|y-x\|^2$.

\smallskip 
In this paper we focus on three popular diversity functions where the ground set is equipped with a metric distance, 
see e.g. \cite{chandra2001approximation, birnbaum2009improved, aghamolaei2015diversity, indyk2014composable, 
gollapudi2009axiomatic,ceccarello2017mapreduce,borodin2012max,bhaskara2016linear}. 
In particular, for a given $n$-point metric space $(X,d)$, a constant $q\in \R_{\geq 1}$ and a parameter $k\in\Z$ 
with $2\leq k\leq n$, we consider the family of problems 
\[
  \max_{T\subseteq X, |T|=k} \dv^q (T), 
\]
 where $\dv^q(T)$ corresponds to one of the following three diversity functions:
\begin{itemize}
 \item \emph{Remote-clique}: $\displaystyle \cl^q(T):=\sum_{\{u,v\} \in \binom{T}{2}} d^q(u,v)=\frac{1}{2} \sum_{u,v\in T} d^q(u,v)$.
 \item \emph{Remote-star}: $\displaystyle\st^q(T):=\min_{z\in T} \sum_{u\in T\setminus \{z\}} d^q(z,u)$. 
 \item \emph{Remote-bipartition}: $\displaystyle\bp^q(T):=\min_{L\subseteq T, |L|=\lfloor |T|/2 \rfloor} 
\sum_{\ell\in L, r\in T\setminus L} d^q(\ell, r)$.
\end{itemize}
Here, $d^q(u,v)$ is the $q$-th power of the distance between $u$ and $v$. 
In the literature, these problems have been mainly considered for $q=1$ 
to which we refer as \emph{standard} remote-clique, remote-star and remote-bipartition respectively. 
We introduce the study of the generalized versions of these diversification problems 
for any real constant $q\in\R_{\geq 1}$.  
This is in the same spirit as the generalized versions of clustering problems recently introduced in 
\cite{cohen2016local, friggstad2016local}. 

We present PTASs for all three problems in the case where the metric space is \emph{doubling}. Recall that a polynomial time approximation scheme, or PTAS, is an algorithm that, 
for any fixed $\epsilon>0$, returns a $(1-\epsilon)$-approximation in polynomial time. The latter is a general and robust class of metric spaces that have low intrinsic dimension and often occur in applications. 
We provide a proper definition in Section~\ref{s:Prelim}.

\subsection*{Contributions of this paper}

Suppose that $(X,d)$ is a metric space of bounded doubling dimension $D$ and that the power $q\geq 1$ is fixed. 
In this setting, our main results are as follows:

\begin{enumerate}
\item We show that there exist polynomial time approximation schemes (PTAS)
  for the remote-clique, remote-star and remote-bipartition problems. 
	We prove this result by means of a single and very simple algorithm, built upon two results: 
	\begin{enumerate}
	\item We show a structural property of all these diversification problems whereby each instance can be partitioned into 
	a main cluster of points of bounded diameter and its complement which has to be part of the optimal solution. 
	\item We perform grid rounding (a standard technique for geometric problems) restricted to the above-mentioned cluster. 
	For this, we show how the analysis of grid rounding can be easily generalized to the $q$-th power of metric distances.
	\end{enumerate}
	We consider both of these results to be of independent interest.

\item For the standard ($q=1$) remote-clique problem, which is arguably the most popular of the problems in consideration, 
we refine our generic algorithm into a fast PTAS that runs in time 
$ O(n(k+\epsilon^{-D}))+ (\epsilon^{-1} \log k)^{O(\epsilon^{-D})}\cdot k$. 
Notice the (optimal) linear dependence on $n$ and mild dependence on $k$, which makes it implementable for very large instances. 
We also remark that even for standard remote-clique, ours is the first PTAS in the literature for general doubling metrics. 
 
\item For the remote-bipartition problem, our algorithm assumes access to a polynomial time oracle that, 
for any $k$-set $T$, returns the value of $\bp^q (T)$. 
For $q=1$, this corresponds to the metric min-bisection problem, known to be NP-hard and admitting a PTAS \cite{fernandez2002polynomial}. 
We generalize this last result and provide a PTAS for min-bisection over doubling metric spaces 
for \emph{any} constant $q \geq 1$, thus validating our main result. 

\item We provide the first NP-hardness proof for remote-clique in fixed doubling dimension. 
More precisely, we prove that the version of remote-clique with squared Euclidean distances in $\R^3$ is NP-hard. 
\end{enumerate}

\begin{table}[t]
\centering
\resizebox{\textwidth}{!}{%
\begin{tabular}{@{}lccccc@{}}
\toprule
\multicolumn{1}{c}{\multirow{2}{*}{\textbf{Problem}}} 
     & \multicolumn{1}{c}{\multirow{2}{*}{\textbf{\begin{tabular}[c]{@{}c@{}}Distance\\ class\end{tabular}}}} 
     & \multicolumn{2}{c}{\textbf{Unbounded dimension}}  & \multicolumn{2}{c}{\textbf{Fixed (doubling) dimension}} \\
\multicolumn{1}{c}{}  & \multicolumn{1}{c}{} & \textbf{Approx.} & \textbf{Hardness} & \textbf{Approx.} & \textbf{Hardness} \\ \midrule
\multirow{2}{*}{clique, $q=1$} & Metric & $1/2$ \cite{hassin1997approximation, birnbaum2009improved} 
     & $1/2 + \epsilon \ \dagger$ \cite{borodin2012max} & PTAS (Thm.~\ref{thm:PTAS1}) & -- \\ \cmidrule(l){2-6}
& $\ell_1$, $\ell_2$ & PTAS \cite{cevallos_2016_max-sum, cevallos2017local} & NP-hard \cite{cevallos_2016_max-sum} 
     & PTAS \cite{fekete2004maximum, cevallos_2016_max-sum, cevallos2017local} & -- \\ \midrule
clique, $q=2$ & $\ell_2$ & PTAS \cite{cevallos_2016_max-sum, cevallos2017local} & NP-hard \cite{cevallos_2016_max-sum} 
     & PTAS \cite{cevallos_2016_max-sum, cevallos2017local} & NP-hard (Thm.~\ref{thm:NP}) \\ \midrule
star, $q=1$ & Metric & 1/2 \cite{chandra2001approximation} & $1/2 + \epsilon \ \dagger$ (Thm.~\ref{thm:planted}) 
     & PTAS (Thm.~\ref{thm:PTAS1}) & -- \\ \midrule 
bipartition, $q=1$ & Metric & 1/3 \cite{chandra2001approximation} & $1/2 + \epsilon \ \dagger$ (Thm.~\ref{thm:planted}) 
     & PTAS (Thm.~\ref{thm:PTAS1}) & -- \\ \midrule
\begin{tabular}[c]{@{}l@{}}3 problems, any\\ const. $q \geq 1$ \end{tabular} & Metric 
     & -- & $2^{-q} + \epsilon \ \dagger$ (Thm.~\ref{thm:planted}) & PTAS (Thm.~\ref{thm:PTAS1}) & NP-hard (Thm.~\ref{thm:NP}) \\ \bottomrule
\end{tabular}%
}
\caption{Current best approximation ratios and hardness results for remote-clique, remote-star and remote-bipartition 
with a highlight on our results. The sign $\dagger$ indicated that the result assumes hardness of the \emph{planted-clique problem}.}
\label{tab:stateoftheart}
\end{table}





\subsection*{Related work}

For the standard case $q=1$ and for general metrics, Chandra and Halld\'orsson~\cite{chandra2001approximation} 
provided a thorough study of several diversity problems, including remote-clique, remote-star and remote-bipartition. 
They observed that all three problems are NP-hard by reductions from the CLIQUE-problem and provided a $\frac{1}{2}$-factor 
and a $\frac{1}{3}$-factor approximation algorithm for remote-star and remote-bipartition respectively. 
Several approximation algorithms are known for remote-clique as well 
\cite{ravi1994heuristic, hassin1997approximation, birnbaum2009improved} with the current best factor being $\frac{1}{2}$. 

\begin{remark}
 Borodin et al.~\cite{borodin2012max} proved that the approximation factor of $\frac{1}{2}$ 
is best possible for standard remote-clique over general metrics 
 under the assumption that the \emph{planted-clique problem}~\cite{alon2011inapproximability} is hard. 
We prove in Theorem~\ref{thm:planted} of Section~\ref{s:NP} that, under the same assumption and for any $q\geq 1$, 
 neither remote-clique, remote-star nor remote-bipartition admits a constant approximation factor higher than $2^{-q}$. 
 Thus, none of the three problems nor their generalizations for $q\geq 1$ admits a PTAS over general metrics.  
\end{remark}

In terms of relevant special cases for standard remote-clique, Ravi et al.~\cite{ravi1994heuristic} provided an efficient exact 
algorithm for instances over the real line, and a factor of $\frac{2}{\pi}$ over the Euclidean plane.  
Later on, Fekete and Meijer~\cite{fekete2004maximum} provided the first PTAS for this problem for fixed-dimensional $\ell_1$ distances, and an improved factor of $\frac{\sqrt{2}}{2}$ over the Euclidean plane.
Very recently, Cevallos et al.~\cite{cevallos_2016_max-sum, cevallos2017local} provided PTASs over $\ell_1$ and $\ell_2$ distances 
of unbounded dimension as well as for distances of \emph{negative type}, a class that contains some popular non-metric distances including the cosine distance. 
We remark however that the running times of all previously mentioned PTASs 
\cite{fekete2004maximum, cevallos_2016_max-sum, cevallos2017local} have a dependence on $n$ given by high-degree polynomials 
(in the worst case) and thus are not suited for large data sets. 

For remote-star and remote-bipartition, to the best of the authors' knowledge there were no previous results in the literature 
on improved approximability for any fixed-dimensional setting, nor for other non-trivial special settings beyond general metrics. 
Moreover, there was no proof of NP-hardness for any of the three problems in a fixed-dimensional setting. 
In particular, showing NP-hardness of a fixed-dimensional version of remote-clique 
was left as an open problem in~\cite{fekete2004maximum}.

\subsection*{Further related results and implications}

In applications of diversity maximization in the area of information retrieval, 
common challenges come from the fact that the data sets are very large 
and/or are naturally embedded in a high dimensional vector space. 
There is active research in dimensionality reduction techniques, see \cite{cunningham2015linear} for a survey. 
It has also been remarked that in many scenarios such as human motion data and face recognition, 
data points have a hidden intrinsic dimension that is very low and independent from the ambient dimension, 
and there are ongoing efforts to develop algorithms and data structures that exploit this fact, see \cite{tenenbaum2000global, indyk2007nearest, dasgupta2008random, gottlieb2015nonlinear}. 
One of the most common and theoretically robust notions of intrinsic dimension is precisely the doubling dimension.
We remark that our algorithm does not need to embed the input points into a vector space (of low dimension or otherwise) 
and does not require knowledge of the doubling dimension, as this parameter only plays a role in the run-time analysis. 

Our proposed generalization of the diversity problems into powers of metric distances is relevant for heuristics related to 
\emph{snowflake metrics}, which have recently gained attention in the literature of dimensionality reduction, 
see \cite{Bartal:2011:DRB:2133036.2133104, gottlieb2015nonlinear, Bartal_2016_dimension}. 
A snowflake of a metric $(X,d)$ is the metric $(X, d^\alpha)$, 
where each distance has been raised to the power $\alpha$ for some constant $0<\alpha<1$, 
and this new metric space often admits embeddings into much lower-dimensional vector spaces than $(X,d)$ does. 
As an application example, Bartal and Gottlieb~\cite{Bartal_2016_dimension} recently presented 
an efficient approximation algorithm for a clustering problem (called min-sum) that works as follows: 
Given an input metric $(X,d)$, they consider its snowflake $(X,d^{1/2})$, 
embed it into a low-dimensional Euclidean space (incurring low distortion), 
and then solve the original instance by applying over the new instance 
a readily available algorithm for the squared distances version of the problem.

A sensible approach when dealing with very large data sets is to perform a \emph{core-set reduction} of the input as a pre-processing step. 
This procedure quickly filters through the input points and discards most of them, 
leaving only a small subset -- the core-set -- that is guaranteed to contain a near-optimal solution. 
There are several recent results on core-set reductions for standard ($q=1$) dispersion problems, see \cite{indyk2014composable, aghamolaei2015diversity, ceccarello2017mapreduce}. 
In particular, Ceccarello et al.~\cite{ceccarello2017mapreduce} recently presented a PTAS-preserving reduction (resulting in an arbitrarily small deterioration of the approximation factor) for all three problems in doubling metric spaces, with the existence of a PTAS left open.
Their construction allows for our algorithm to run in a machine of restricted memory and adapts it to streaming and distributed models of computation. Besides showing that a PTAS exists, we can combine our results with theirs. We refer the interested reader to the previously mentioned references and limit ourselves to remark a direct consequence 
of Theorem~\ref{thm:PTAS1} and \cite[Theorems 3 and 9]{ceccarello2017mapreduce}.

\begin{corollary}
\label{co:1}
 For $q=1$ and any constant $\epsilon>0$, our three diversity problems over metric spaces of constant doubling dimension $D$ admit $(1-\epsilon)$-approximations that execute as single-pass and 2-pass streaming algorithms, 
 in space $O(\epsilon^{-D} k^2)$ and $O(\epsilon^{-D} k)$ respectively.
\end{corollary}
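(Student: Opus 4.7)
The plan is to obtain Corollary~\ref{co:1} as a purely compositional consequence: first run a streaming core-set reduction from Ceccarello et al.~\cite{ceccarello2017mapreduce} to shrink the input to a small representative subset $C\subseteq X$, then run our PTAS from Theorem~\ref{thm:PTAS1} on $C$. Since the streaming phase only stores $C$ and the offline PTAS is invoked once on the retained points, the overall algorithm inherits the space usage of the streaming phase while the approximation quality is the product of the two ratios.

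More concretely, I would proceed as follows. First, fix a target accuracy $\epsilon>0$ and split it into two parts, say set $\epsilon'=\epsilon/3$, so that $(1-\epsilon')^2\geq 1-\epsilon$. Invoke \cite[Theorem~3]{ceccarello2017mapreduce}, which for $q=1$ and each of the three diversity measures produces, in a single pass over the stream, a core-set $C$ of size $O(\epsilon'^{-D}\,k^2)$ such that $C$ contains a $k$-subset achieving a $(1-\epsilon')$-fraction of the optimal diversity of $X$; and \cite[Theorem~9]{ceccarello2017mapreduce}, which in two passes achieves the same guarantee with a core-set of size $O(\epsilon'^{-D}\,k)$. These sizes are exactly the claimed space bounds (up to the absorption of $\epsilon'$ into $\epsilon$).

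Second, run the PTAS of Theorem~\ref{thm:PTAS1} on the metric subspace induced by $C$ with accuracy parameter $\epsilon'$. This returns a $k$-subset $T\subseteq C$ with $\dv^1(T)\geq (1-\epsilon')\max_{T'\subseteq C,\,|T'|=k}\dv^1(T')$. Composing with the core-set guarantee yields
\[
  \dv^1(T)\;\geq\;(1-\epsilon')^2\,\opt\;\geq\;(1-\epsilon)\,\opt,
\]
as desired. The PTAS is executed offline after the stream is consumed, so its running time does not affect the working memory, and only $C$ together with the distances among its points must be stored.

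There is really no hard step here, since the statement is explicitly flagged as a \emph{direct} consequence of Theorem~\ref{thm:PTAS1} and the cited results of \cite{ceccarello2017mapreduce}; the only care needed is the bookkeeping of the accuracy parameter (splitting $\epsilon$ between the core-set and the PTAS) and the observation that the streaming space bounds of \cite{ceccarello2017mapreduce} are preserved verbatim because the PTAS is applied as a post-processing step on the stored core-set. The doubling assumption, which is required by both the core-set construction and by Theorem~\ref{thm:PTAS1}, is inherited since any subset of a doubling metric space is itself doubling with the same dimension up to constants.
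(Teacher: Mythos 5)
Your proposal is correct and is exactly the argument the paper intends: the corollary is stated as a direct consequence of composing the streaming core-set constructions of \cite[Theorems 3 and 9]{ceccarello2017mapreduce} with the offline PTAS of Theorem~\ref{thm:PTAS1}, splitting the accuracy parameter between the two stages. The paper gives no further proof, so your write-up (including the $\epsilon$-bookkeeping and the remark that subsets of doubling spaces remain doubling) fills in precisely the intended details.
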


\subparagraph*{Organization of the paper.} 
In Section~\ref{s:Prelim} we provide some needed notation, background techniques and results.
Section~\ref{s:PTAS1} presents the general algorithm (Theorem~\ref{thm:PTAS1}) and Section~\ref{s:PTAS2} describes a faster algorithm for standard remote-clique (Theorem~\ref{thm:PTAS2}). 
Section~\ref{s:BP} contains a PTAS for the generalized min-bisection problem (Theorem~\ref{thm:BP}) 
and Section~\ref{s:NP} is dedicated to hardness results (Theorems \ref{thm:planted} and \ref{thm:NP}). Acknowledgements can be found in Section~\ref{s:Ack}. Finally, for best readability the proofs of some lemmas have been moved to Appendix~\ref{s:proofs}. 


\section{Preliminaries}\label{s:Prelim}

A \emph{(finite) metric space} is a tuple $(X,d)$, where $X$ is a finite set and $d:X\times X\rightarrow \R_{\geq 0}$ 
is a symmetric distance function that satisfies the triangle inequality with $d(u,u)=0$ for each point $u\in X$. 
For any set $S\subseteq X$, its complement is denoted by $\bar{S}:= X\setminus S$. 
For a point $u\in X$ and a parameter $r\in \R_{\geq 0}$, the \emph{ball centered} at $u$ of radius $r$ is defined as 
$B(u,r):=\{v\in X: \ d(u,v)\leq r\}$. 
The \emph{doubling dimension} of $(X,d)$ is the smallest $D\in \R_{\geq 0}$ 
such that any ball in $X$ can be covered by at most $2^D$ balls of half its radius. In other words, for each $u\in X$ and $r> 0$, there exist points $v_1, \cdots, v_t\in X$ with $t\leq 2^D$ 
such that $B(u,r)\subseteq \cup_{i=1}^t B(v_i, r/2)$. 
A family of metric spaces is \emph{doubling} if their doubling dimensions are bounded by a constant.  
It is well known that all metric spaces induced by a normed vector space of bounded dimension are doubling.

We rely on the standard \emph{cell decomposition} technique and \emph{grid rounding}, see~\cite{har2011geometric}.
We assume without loss of generality that the diameter of $(X,d)$ (the largest distance between two points) is one. 
For a parameter $\delta>0$, the following greedy procedure partitions $X$ into \emph{cells} of radius $\delta$. 
Initially, define all points in $X$ to be white. While there exist white points, pick one that we call $u$, 
color it red and assign all white points $v\in X$ with $d(u,v)\leq \delta$ to $u$ and color them blue. 
A cell is now comprised of a red point, declared to be the cell center, and all the blue points assigned to it. 
Notice that this procedure executes in time $O((\text{\# cells})\cdot |X|)$ 
and requires no knowledge of the value of the doubling dimension $D$. 
For any set $S\subseteq X$, we denote by $\pi(S)$ the set of centers of those cells that intersect with $S$. In other words, subset $\pi(X)$ is a \emph{$\delta$-net} of $X$, 
meaning that a) any two points in $\pi(X)$ are at distance strictly larger than $\delta$, 
and b) for any point in $X$ there is a point in $\pi(X)$ at distance at most $\delta$. We denote by $\hat{\pi}(S)$ the corresponding multiset over $\pi(S)$ with each point $u\in \pi(S)$ 
having multiplicity $|\pi^{-1}(u)\cap S|$, see Figure~\ref{fig:decomposition}.

 \begin{figure}[t]
   \centering
   \includegraphics[height=2.5cm]{./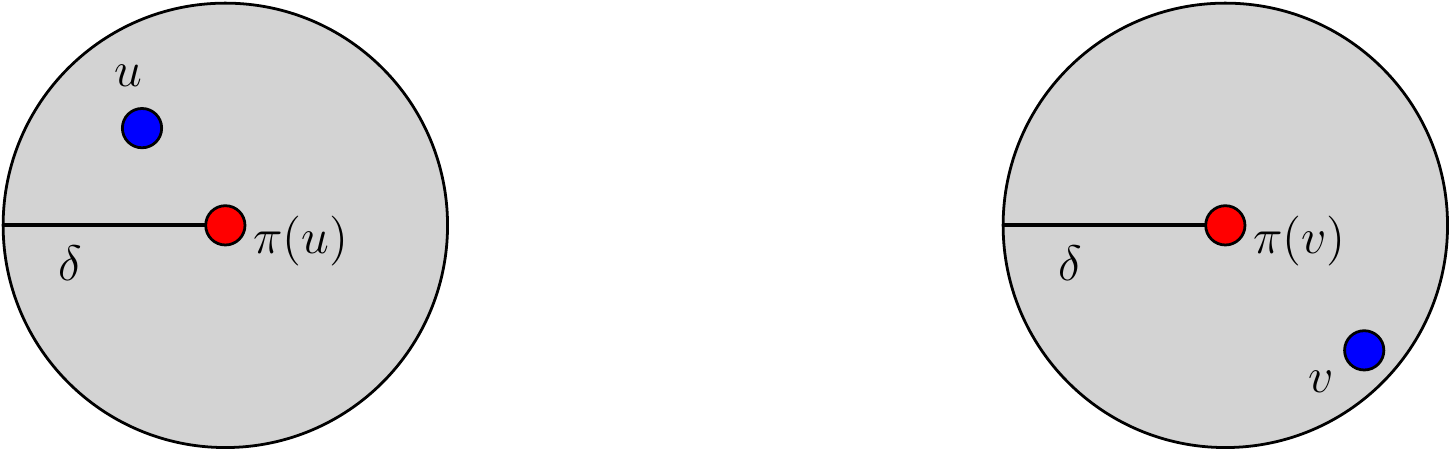}
   \caption{A cell decomposition of radius $\delta$. For each point $u\in X$, one has $d(u, \pi(u))\leq \delta$.}
   \label{fig:decomposition}
 \end{figure}

Grid rounding means to \emph{move} or \emph{round} each point to its respective cell center, to turn $X$ into the multiset $\hat{\pi}(X)$. 
This incurs a location error of at most $δ$ for each point. 
How many cells and thus distinct positions does this algorithm produce? 
If $(X,d)$ is of doubling dimension $D$, a direct consequence of the definition is that for any parameters 
$r$ and $\rho$ in $\R_{> 0}$, a ball of radius $r$ can be covered by at most $(2/\rho)^D$ balls of radius $\rho r$. 
Since $X$ is contained in a ball of radius one, the number of cells produced is bounded by $(4 /\delta)^D$. 
Indeed, $X$ can be covered by $(4 /\delta)^D$ balls of radius $\delta/2$ and each such ball contains at most one cell center since, 
by construction, the distance between any two cell centers is strictly larger than $\delta$. 
If parameters $\delta$ and $D$ are considered constant, then the number of distinct positions in the rounded instance $\hat{\pi}(X)$ 
is bounded by a constant. An optimal solution to many geometric problems can be found efficiently by exhaustive search. For instances of diversity maximization over the multiset $\hat{\pi}(X)$,  
we consider each high-multiplicity point as distinct points having pairwise distance zero.

\smallskip

The following two lemmas correspond respectively to standard inequalities used for powers of metric distances, see \cite{cohen2016local, friggstad2016local}, and to trivial relations among our three diversity functions. 
Their proofs are deferred to Appendix~\ref{s:proofs}.

\begin{lemma}\label{lem:relaxedtriangle}
Fix a constant $q\geq 1$. For any three points $u,v,w\in X$ one has 
 \begin{align}
  d^q(u,w) &\leq 2^{q-1} \big[d^q(u,v)+d^q(v,w)\big] \ \text{ or equivalently }  \label{eq:triangle1} \\ 
  \ d^q(u,v) &\geq 2^{-(q-1)} d^q(u,w) - d^q(v,w). \label{eq:triangle2}
 \end{align}
For any numbers $x,y\in \R_{\geq 0}$ and $0\leq \epsilon \leq 1$,  
\begin{align}
 (x+\epsilon y)^q \leq x^q + 2^q \epsilon \cdot \max\{x^q,y^q\}. \label{eq:triangle3}
\end{align}
\end{lemma}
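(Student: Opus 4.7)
The plan is to derive all three inequalities from the convexity of the map $t \mapsto t^q$ on $\R_{\geq 0}$, valid because $q \geq 1$. Note that \eqref{eq:triangle2} is just an algebraic rearrangement of \eqref{eq:triangle1}—multiply both sides by $2^{-(q-1)}$ and move $d^q(v,w)$ across—so the substantive work lies entirely in \eqref{eq:triangle1} and \eqref{eq:triangle3}.

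For \eqref{eq:triangle1}, I would start from the ordinary triangle inequality $d(u,w) \leq d(u,v) + d(v,w)$ and raise both sides to the $q$-th power, which preserves the direction because the quantities are nonnegative and $q \geq 1$. This reduces the claim to the elementary inequality $(a+b)^q \leq 2^{q-1}(a^q + b^q)$ for all $a,b \in \R_{\geq 0}$, which follows immediately from Jensen's inequality applied to $t \mapsto t^q$ with weights $\tfrac12,\tfrac12$: one has $\bigl((a+b)/2\bigr)^q \leq (a^q + b^q)/2$, and multiplying by $2^q$ yields the bound.

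For \eqref{eq:triangle3}, the key idea is to write $x + \epsilon y$ as a convex combination in just the right way. Because $0 \leq \epsilon \leq 1$, one has the identity $x + \epsilon y = (1-\epsilon)\cdot x + \epsilon \cdot (x+y)$, and convexity of $t \mapsto t^q$ gives $(x + \epsilon y)^q \leq (1-\epsilon) x^q + \epsilon (x+y)^q \leq x^q + \epsilon (x+y)^q$. It remains only to estimate $(x+y)^q \leq (2\max\{x,y\})^q = 2^q \max\{x^q, y^q\}$, and chaining the two bounds produces exactly \eqref{eq:triangle3}.

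There is no real obstacle here; the argument is essentially two applications of Jensen's inequality. The one subtlety worth flagging is the choice of decomposition in \eqref{eq:triangle3}: the more naive route of estimating $(x+\epsilon y)^q - x^q$ via the mean value theorem produces a factor of the form $q \cdot 2^{q-1}$, which is worse than $2^q$ as soon as $q > 2$. The convex-combination decomposition above avoids this loss and gives the constant $2^q$ uniformly in $q \geq 1$.
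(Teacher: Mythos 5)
Your proof is correct, but it follows a genuinely different route from the paper's, so a comparison is worthwhile. For \eqref{eq:triangle1}, the paper also first invokes the ordinary triangle inequality and monotonicity of $t\mapsto t^q$, but then establishes the power--mean inequality $(a+b)^q\leq 2^{q-1}(a^q+b^q)$ by setting $t:=d(v,w)/d(u,v)\in[0,1]$ and showing that $f(t):=(1+t)^q/(1+t^q)$ is increasing on $[0,1]$ with $f(1)=2^{q-1}$, a calculus argument; your one-line appeal to Jensen with weights $\tfrac12,\tfrac12$ gives the same bound more directly. The difference is more pronounced for \eqref{eq:triangle3}: the paper splits into the cases $x\leq\epsilon y$ (where it bounds $(x+\epsilon y)^q\leq(2\epsilon y)^q\leq 2^q\epsilon y^q$ using $\epsilon^q\leq\epsilon$) and $x>\epsilon y$ (where it expands $(1+\epsilon y/x)^q$ as a binomial series, pulls out one factor of $\epsilon$ from each term, bounds $x^{q-j}y^j$ by $\max\{x^q,y^q\}$, and sums the coefficients to $2^q$). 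Your decomposition $x+\epsilon y=(1-\epsilon)x+\epsilon(x+y)$ followed by convexity and $(x+y)^q\leq 2^q\max\{x^q,y^q\}$ reaches the same constant $2^q$ with no case distinction and no series manipulation; it sidesteps the convergence and coefficient-sign subtleties that the binomial series raises for non-integer $q$, and is arguably the cleaner argument. Your closing remark about the mean value theorem losing a factor (yielding $q\cdot 2^{q-1}$ rather than $2^q$) is accurate and correctly identifies why the convex-combination trick is the right one. The reduction of \eqref{eq:triangle2} to \eqref{eq:triangle1} by rearrangement matches what the paper (implicitly) does.
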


\begin{lemma}\label{lem:objectiverelations}
 Fix a constant $q\geq 1$. For any $k$-set $T\subseteq X$,
 \begin{align}
  \frac{k}{2} \cdot \st^q(T)  &\leq \cl^q(T) \leq 2^{q-1} k \cdot \st^q(T) \quad \text{and} \label{eq:relation1}\\
   \frac{2 (k-1)}{k}\cdot \bp^q(T) &\leq \cl^q(T) \leq (2^q+1) \cdot \bp^q(T) \quad \text{(assuming that $k$ is even).}  \label{eq:relation2}
 \end{align}
\end{lemma}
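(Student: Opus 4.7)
The four inequalities split naturally into two pairs, one comparing $\st^q$ with $\cl^q$, the other comparing $\bp^q$ with $\cl^q$. In both pairs the lower bound on $\cl^q$ will come from an averaging argument, while the upper bound will come from anchoring every distance to a ``witness'' chosen from the optimizer of the other function, combined with the relaxed triangle inequality \eqref{eq:triangle1} of Lemma~\ref{lem:relaxedtriangle}.

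For the star--clique inequalities, the plan is to introduce the shorthand $S_z := \sum_{u\in T\setminus\{z\}} d^q(z,u)$, so that $\st^q(T)=\min_{z\in T} S_z$ and $\cl^q(T)=\tfrac12\sum_{z\in T} S_z$. The lower bound $\tfrac{k}{2}\st^q(T)\le \cl^q(T)$ then reduces to the fact that an average is at least as large as a minimum. For the upper bound, I let $z^\ast$ achieve $\st^q(T)=S_{z^\ast}$ and apply \eqref{eq:triangle1} with $v=z^\ast$ to every pair $(u,v)\in T\times T$, which after summing gives
\[
\sum_{u,v\in T} d^q(u,v)\;\le\; 2^{q-1}\!\sum_{u,v\in T}\!\big[d^q(u,z^\ast)+d^q(z^\ast,v)\big]\;=\;2^q k\cdot \st^q(T),
\]
and halving yields $\cl^q(T)\le 2^{q-1}k\cdot\st^q(T)$.

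For the bipartition--clique inequalities with $k$ even, the lower bound $\tfrac{2(k-1)}{k}\bp^q(T)\le \cl^q(T)$ will come from a random balanced partition. For a uniformly chosen $L\subset T$ with $|L|=k/2$, a short counting argument shows that any fixed pair $\{u,v\}$ is separated with probability $\tfrac{k}{2(k-1)}$, so the expected cut value is exactly $\tfrac{k}{2(k-1)}\cl^q(T)$, and $\bp^q(T)$ (a minimum) cannot exceed this expectation. For the upper bound, let $(L^\ast,R^\ast)$ be an optimal bipartition; decompose $\cl^q(T)$ into intra-$L^\ast$, intra-$R^\ast$ and cross contributions. The cross part equals $\bp^q(T)$, and for each pair $\ell,\ell'\in L^\ast$ I bound $d^q(\ell,\ell')$ by averaging \eqref{eq:triangle1} over a witness $r\in R^\ast$:
\[
|R^\ast|\sum_{\{\ell,\ell'\}\subset L^\ast}\!\! d^q(\ell,\ell')\;\le\; 2^{q-1}(|L^\ast|-1)\!\sum_{\ell\in L^\ast,\, r\in R^\ast}\!\! d^q(\ell,r).
\]
With $|L^\ast|=|R^\ast|=k/2$ the coefficient becomes $2^{q-1}(k-2)/k<2^{q-1}$; the analogous bound holds inside $R^\ast$, so the two intra-cluster contributions together add at most $2^q\bp^q(T)$, giving $\cl^q(T)\le (2^q+1)\bp^q(T)$.

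None of the four estimates should be genuinely hard; the only step requiring a little care is the probability computation $\tfrac{k}{2(k-1)}$ for the balanced random cut, and making sure the factor $|L^\ast|-1$ (rather than $|L^\ast|$) is extracted cleanly in the intra-cluster bound so that the constants match the stated $(2^q+1)$ exactly.
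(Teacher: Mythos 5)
Your proposal is correct and follows essentially the same route as the paper's proof: min-versus-average for the star lower bound, the relaxed triangle inequality anchored at the star center for the star upper bound, the random balanced cut with separation probability $\tfrac{k}{2(k-1)}$ for the bipartition lower bound, and the intra/cross decomposition with witnesses in the opposite part for the bipartition upper bound. The only (harmless) difference is that your intra-cluster bound is stated over unordered pairs, yielding the marginally sharper factor $2^{q-1}(k-2)/k$ where the paper settles for $2^{q-1}$.
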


Whenever we deal with remote-bipartition, we assume for simplicity that $k$ is even 
-- all our results can easily be extended to the odd case, up to a change in constants by a factor $2^{O(q)}$. 
Therefore, the diversity functions correspond to the sum of $\binom{k}{2}$, $(k-1)$ and $k^2/4$ terms, 
respectively for remote-clique, remote-star and remote-bipartition. 
Consequently, for each function $\dv^q$ and for a given instance, we fix an optimal $k$-set denoted by $OPT_{\dv^q}$  
and define its \emph{average optimal value} $\Delta_{\dv^q}$ as follows:
\begin{itemize}
 \item $\displaystyle \Delta_{\cl^q}:= \cl^q(OPT_{\cl^q})/ \binom{k}{2}$,
 \item $\displaystyle \Delta_{\st^q}:= \st^q(OPT_{\st^q}) / (k-1)$,
 \item $\displaystyle\Delta_{\bp^q}:= \bp^q(OPT_{\bp^q}) / (k^2/4)$.
\end{itemize}

Whenever the diversity function $\dv^q$ is clear from context, or for general statements on all three functions, we use $OPT$ and $\Delta$ as short-hands for $OPT_{\dv^q}$ and $\Delta_{\dv^q}$ respectively. 

\begin{remark}
 It directly follows from Lemma~\ref{lem:objectiverelations} that for a common metric space and common parameters $q\geq 1$ and $k$, 
 the average optimal values $\Delta_{\cl^q}$, $\Delta_{\st^q}$ and $\Delta_{\bp^q}$ are all just a constant away from each other 
 (a constant $2^{O(q)}$ that is independent of $n$ and $k$). 
 We heavily use this property linking our three problems in the proof of our key structural result (Theorem \ref{thm:far=opt}). 
 A similar result does not extend to other common diversity maximization problems such as remote-edge, remote-tree and remote-cycle,  see~\cite{chandra2001approximation} for definitions. 
 This seems to be a bottleneck for possibly adapting our approach to those problems.
\end{remark}


\section{A PTAS for all three diversity problems}\label{s:PTAS1}

We now come to our main result which is the following theorem. 

\begin{theorem} \label{thm:PTAS1}
For any constant $q \in \mathbb{R}_{\geq 1}$, the $q$-th power versions of the remote-clique, remote-star and remote-bipartion problems admit PTASs over doubling metric spaces.
\end{theorem}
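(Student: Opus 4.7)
The plan is to reduce each instance to one over a constant-size ground set via a two-step reduction. First, I would identify a set $F \subseteq X$ of \emph{far} (outlier) points which must belong to every near-optimal solution; these are set aside and included in the output. Second, I would apply grid rounding to the remaining points, a \emph{main cluster} of bounded diameter, to reduce it to $O((1/\epsilon)^D)$ distinct positions, over which one can exhaustively search for the best $k-|F|$ points to combine with $F$. Crucially, the same algorithm handles all three diversity functions simultaneously because, by Lemma~\ref{lem:objectiverelations}, the averages $\Delta_{\cl^q}$, $\Delta_{\st^q}$ and $\Delta_{\bp^q}$ differ by only a constant factor $2^{O(q)}$, so the scale at which rounding is performed is essentially universal and the same rounding error bound governs all three.

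The first key ingredient is the structural theorem (Theorem~\ref{thm:far=opt} in the paper). The plan is to guess the scale $\Delta$ by trying polynomially many candidates, then declare a point $u$ to be \emph{far} if its distance from a carefully chosen reference set exceeds a threshold $\Theta(\Delta^{1/q})$. I would argue $|F| \leq k$ and that every point of $F$ lies in $OPT$ by a swap argument: if a far point $u \notin S$ for a near-optimal $S$, then by construction $u$ contributes via its large pairwise distances substantially more to $\dv^q$ than some $v \in S \setminus F$, so $S - v + u$ would strictly improve the objective, contradicting near-optimality. Inequalities (\ref{eq:triangle1}) and (\ref{eq:triangle2}) of Lemma~\ref{lem:relaxedtriangle} make this swap argument work in the $q$-th-power regime, and the cross-relations of Lemma~\ref{lem:objectiverelations} let a single contribution bound phrased through $\cl^q$ serve for all three functions. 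As a by-product, $X \setminus F$ lies in a ball of radius $O(\Delta^{1/q})$.

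The second ingredient is grid rounding restricted to $X \setminus F$. I would apply the cell decomposition of Section~\ref{s:Prelim} with radius $\delta = c\epsilon\,\Delta^{1/q}$ for a suitable constant $c=c(q)$, producing at most $(4/(c\epsilon))^D$ distinct positions. For the rounding error, inequality (\ref{eq:triangle3}) of Lemma~\ref{lem:relaxedtriangle} gives, for each pair $u,v \in X \setminus F$, that $|d^q(u,v) - d^q(\pi(u),\pi(v))|$ is bounded by $O(\epsilon)\cdot\max\{d^q(u,v),\delta^q\} = O(\epsilon\,\Delta)$, using that $X \setminus F$ has diameter $O(\Delta^{1/q})$. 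Summing over the $\binom{k}{2}$, $(k-1)$, or $k^2/4$ relevant pairs yields a total rounding error of $O(\epsilon)\cdot OPT$, which absorbs into the PTAS guarantee. Since the non-far portion has only constantly many distinct positions, exhaustive search over $k$-multisets of cell centers runs in polynomial time, and one returns the best solution across all guesses of $\Delta$.

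The main obstacle will be the structural theorem, because the swap argument must work simultaneously for all three diversity objectives and for every constant $q \geq 1$. When a swap $v \mapsto u$ is performed, the contribution of $v$ to $\dv^q$ is not directly comparable to that of $u$, and for $\st^q$ and $\bp^q$ the attained center or bisection may change after the swap, complicating accounting. My strategy is to avoid case analysis by channelling everything through $\cl^q$ via the chain of inequalities in Lemma~\ref{lem:objectiverelations}, so that a lower bound on the lost contribution phrased in terms of the clique objective, combined with the powered reverse triangle inequality (\ref{eq:triangle2}), implies that no replacement for $u$ can recover the loss. A secondary technical issue is guessing the scale $\Delta$ without losing a factor of $n$ in running time; this can be resolved by starting from an easily computable constant-factor approximation and enumerating values in a geometric sequence of ratio $1+\epsilon$.
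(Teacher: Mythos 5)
Your proposal follows essentially the same route as the paper: a structural theorem showing all points outside a ball of radius $O(\Delta^{1/q})$ around a reference point (the paper uses the center $z_0$ of the minimum-weight spanning star of $\opt$) must lie in $\opt$, proved by a swap argument channelled through the cross-relations of Lemma~\ref{lem:objectiverelations} and the powered triangle inequalities, followed by grid rounding of the main cluster at scale $\Theta(\epsilon\,\Delta^{1/q})$ and exhaustive search over the constantly many resulting positions, with $\Delta$ and the reference point guessed by enumeration. The only point you leave implicit that the paper must address separately is that evaluating $\bp^q(T)$ itself requires an approximate oracle (the generalized min-bisection PTAS of Section~\ref{s:BP}), but otherwise the approach and the key lemmas invoked coincide with the paper's.
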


Let us fix a constant error parameter $\epsilon>0$. 
Our algorithm is based on grid rounding. 
However, if we think about the case $q=1$, a direct implementation of this technique requires a cell decomposition 
of radius $O(\epsilon \cdot \Delta)$, which is manageable only if $\Delta$ is large enough with respect to the diameter. 
Otherwise, the number of cells produced may be super-constant in $n$ or $k$. 
Hence, a difficult instance is one where $\Delta$ is very small, which intuitively occurs only in the degenerate case where 
most of the input points are densely clustered in a small region, with very few points outside of it, see Figure~\ref{fig:twostars}.
The algorithmic idea is thus to partition the input points into a \emph{main cluster} and a collection of \emph{outliers}, 
and treat these sets differently.

\begin{figure}[t]
\centering
\begin{tabular}{lr}
\includegraphics[scale=0.45]{./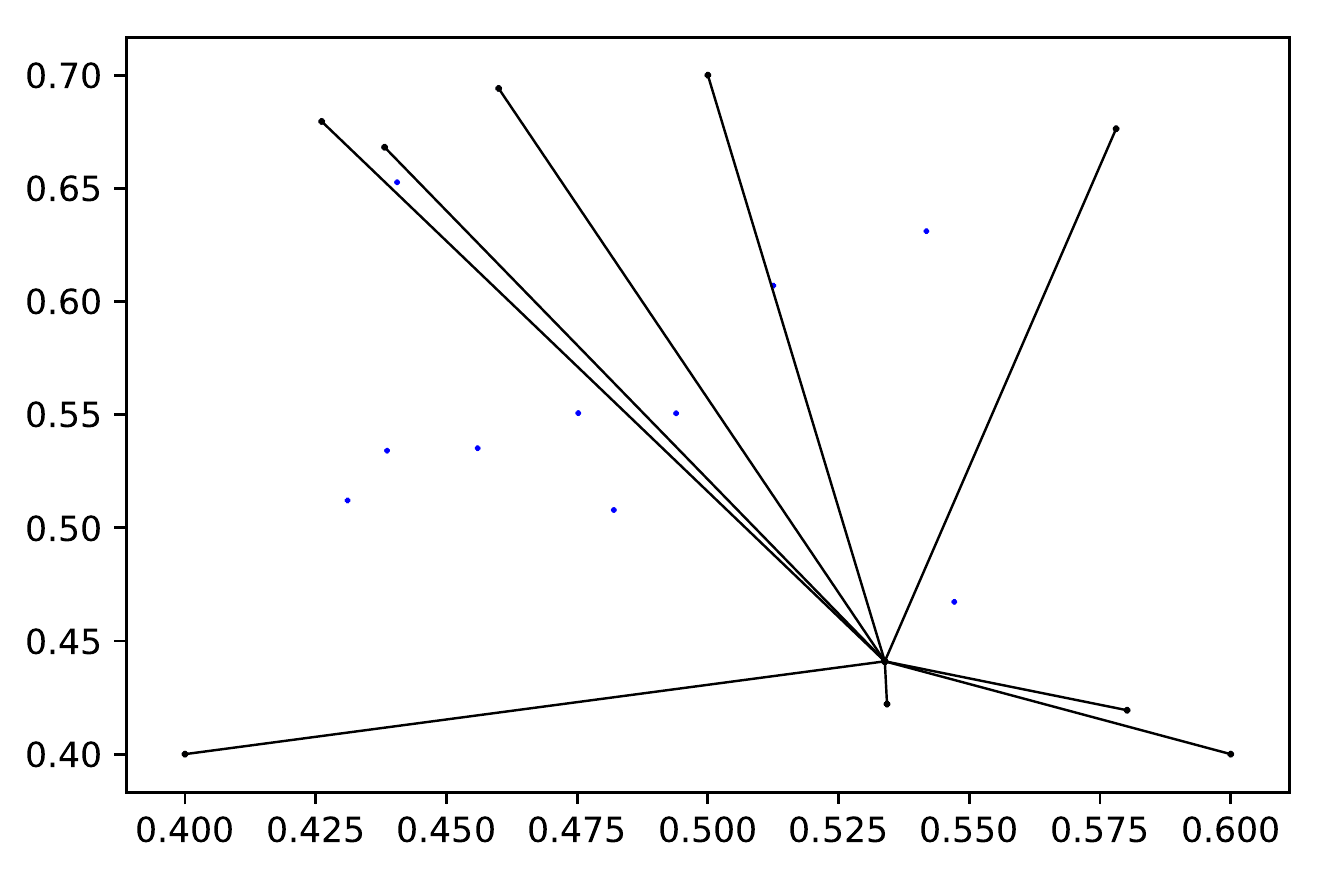}
&
\includegraphics[scale=0.45]{./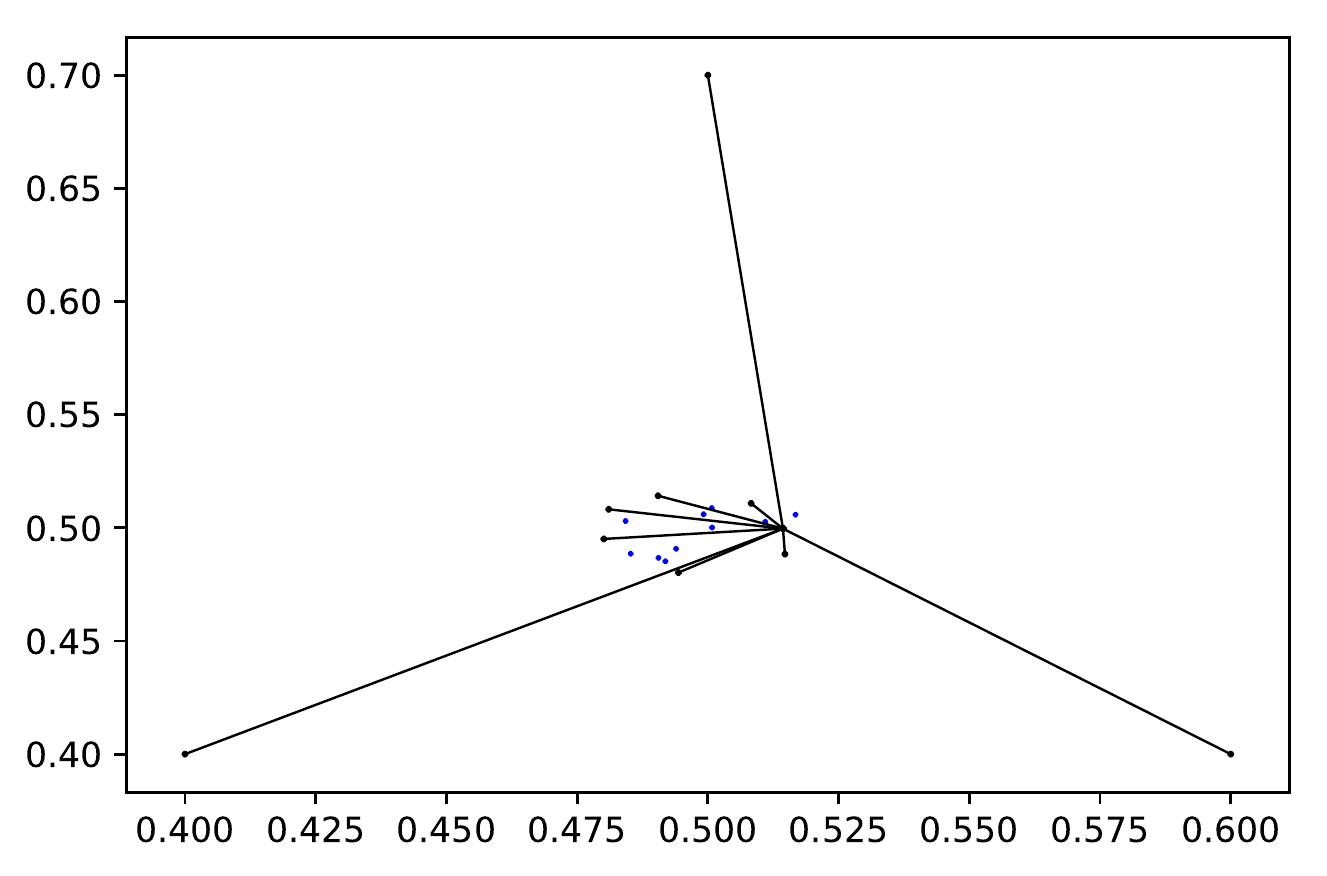}
\end{tabular}
\caption{Two instances of remote-star over the Euclidean plane, with $q=1$ and $k=10$, 
drawn together with the minimum weight spanning star of the optimal solution $\opt$. 
On the left, a well-dispersed instance where $\Delta$ is of the same order as the diameter. 
On the right, a degenerate instance with small $\Delta$ 
and a clear partition of the input into a main cluster and a set of three outliers.}
\label{fig:twostars}
\end{figure}

\subsection{Key structural result}\label{ss:mainthm}

We identify in any instance a main cluster containing most of the input points. 
This cluster corresponds to a ball with a radius that is bounded with respect to $\Delta^{1/q}$. 
Thanks to the nature of the diversity functions, we can guarantee that \emph{all outliers are contained in $\opt$}. 

\begin{theorem}\label{thm:far=opt}
Fix a constant $q\geq 1$. 
For each diversity function $\dv^q$ in $\{\cl^q, \st^q, \bp^q\}$ and a fixed optimal $k$-set $\opt_{\dv^q}\subseteq X$, 
there is a point $z_0 = z_0(\dv^q)$ in $\opt_{\dv^q}$ so that 
\[
X\setminus B(z_0,c_{\dv^q} (\Delta_{\dv^q})^{1/q}) \subseteq \opt_{\dv^q},
\]
where $c_{\cl^q}=2$, $c_{\st^q}=4$, and $c_{\bp^q}=6$.
\end{theorem}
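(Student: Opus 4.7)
My plan is a local-search / exchange argument: for each function $\dv^q \in \{\cl^q, \st^q, \bp^q\}$ I will pick a specific $z_0 \in \opt_{\dv^q}$, assume for contradiction that some $u \notin \opt_{\dv^q}$ satisfies $d(u, z_0) > c_{\dv^q}\Delta^{1/q}$, and exhibit a $k$-set $T$ containing $u$ with $\dv^q(T) > \dv^q(\opt_{\dv^q})$. For remote-clique, I take $z_0$ to be a \emph{median} of $\opt$, i.e., a point minimizing $\sum_{v \in \opt} d^q(z_0, v)$; by averaging this sum is at most $2\cl^q(\opt)/k = (k-1)\Delta$. I then sum the gain over all $k$ candidate swaps $w \to u$, which simplifies to $(k-1)\bigl[\sum_{v \in \opt} d^q(u, v) - k\Delta\bigr]$; applying~(\ref{eq:triangle2}) gives $\sum_{v \in \opt} d^q(u, v) \geq k \cdot 2^{-(q-1)} d^q(u, z_0) - (k-1)\Delta$, and $d^q(u, z_0) > 2^q\Delta$ is just enough to push this past $k\Delta$. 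Hence some swap must strictly improve $\cl^q$, contradicting optimality.

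For remote-star, I take $z_0$ to be the star center of $\opt$; then $\sum_{v \in \opt \setminus \{z_0\}} d^q(z_0, v) = (k-1)\Delta$, and by optimality of $z_0$ one has $\sum_{v \in \opt \setminus \{z_1\}} d^q(z_1, v) \geq (k-1)\Delta$ for every $z_1 \in \opt$. The swap I try is $T := (\opt \setminus \{z_0\}) \cup \{u\}$, and I must show $\sum_{v \in T \setminus \{z'\}} d^q(z', v) > (k-1)\Delta$ for every candidate center $z' \in T$. The case $z' = u$ follows directly from~(\ref{eq:triangle2}) together with $d^q(u, z_0) > 4^q\Delta$. For $z' = z_1 \in \opt \setminus \{z_0\}$ I split on $d^q(z_1, z_0)$: if $d^q(z_1, z_0) < 2^q\Delta$, I combine the optimality bound on $z_0$ with a single use of~(\ref{eq:triangle2}) to lower-bound $d^q(z_1, u)$; if $d^q(z_1, z_0) \geq 2^q\Delta$, I instead apply~(\ref{eq:triangle2}) term by term on each $d^q(z_1, v)$ via $z_0$. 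The constant $c_{\st^q} = 4$ is chosen so that both sub-cases meet exactly at the threshold $d^q(z_1, z_0) = 2^q\Delta$.

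For remote-bipartition, let $z_0$ be a median of $\opt_{\bp^q}$. Lemma~\ref{lem:objectiverelations} gives $\cl^q(\opt_{\bp^q}) \leq (2^q + 1)(k^2/4)\Delta$, so averaging yields $\sum_{v \in \opt} d^q(z_0, v) \leq (2^q + 1)(k/2)\Delta$. Take again $T := (\opt \setminus \{z_0\}) \cup \{u\}$; since $\bp^q(T)$ is a minimum over bisections, I must beat $\bp^q(\opt)$ on every bisection $(L', R')$ of $T$. Placing $u \in L'$ without loss of generality, the set $(L' \setminus \{u\}) \cup \{z_0\}$ paired with $R'$ is a bisection of $\opt$ of value at least $\bp^q(\opt)$; subtracting, the value of $(L', R')$ in $T$ is at least $\bp^q(\opt) + \sum_{r \in R'}\bigl[d^q(u, r) - d^q(z_0, r)\bigr]$. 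Bounding the bracket via~(\ref{eq:triangle2}) and the uniform bound on $\sum_v d^q(z_0, v)$ gives strict positivity whenever $d^q(u, z_0) > 2^q(2^q + 1)\Delta$; the elementary inequality $3^q \geq 2^q + 1$ for $q \geq 1$ absorbs this into the hypothesis $d(u, z_0) > 6\Delta^{1/q}$, with equality at $q = 1$ explaining the exact constant $6$.

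I expect the remote-star case to be the main obstacle, because $\st^q(T)$ is a minimum over all possible centers in $T$ and, unlike for clique or bipartition, no single averaging argument will suffice: for a far-away $z_1 \in \opt \setminus \{z_0\}$, the bound on $\sum_v d^q(z_0, v)$ controls nothing about $z_1$ directly. The saving observation is that such a far $z_1$ automatically has a large $\sum_v d^q(z_1, v)$ just from the relaxed triangle inequality, and this direct bound meets the optimality-based bound exactly at the threshold $d^q(z_1, z_0) = 2^q\Delta$, which is precisely where the clean constant $c_{\st^q} = 4$ emerges.
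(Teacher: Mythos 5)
Your proposal is correct and follows essentially the same route as the paper: the same choice of $z_0$ (the minimum-weight star center of $\opt$, which is your ``median''), the same single exchange $z_0 \to u$, and the same use of inequalities~(\ref{eq:triangle2}), (\ref{eq:relation1}) and (\ref{eq:relation2}), producing the constants $2$, $4$, $6$ in exactly the same way. The only cosmetic differences are that for remote-clique you average the gain over all $k$ swaps instead of evaluating the swap $z_0 \to u$ directly, and that for remote-star you organize the dichotomy at the threshold $d^q(z_1,z_0)=2^q\Delta$ by candidate center of $T$, whereas the paper first localizes the new optimal star center via its claim~(\ref{eq:z=close}) -- the underlying computation is the same.
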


\begin{proof}
For each function $\dv^q$ in $\{\cl^q, \st^q, \bp^q\}$, let $z_0=z_0(\dv^q)$ be the center of the minimum weight spanning star in $\opt_{\dv^q}$ so that $\st^q(\opt_{\dv^q}) = \sum_{u\in \opt_{\dv^q}} d^q(z_0,u)$. 
Consider a point $s=s(\dv^q)$ outside of the ball $B(z_0,c_{\dv^q} (\Delta_{\dv^q})^{1/q})$, i.e.
\begin{align}
d^q(z_0, s) > (c_{\dv^q})^q \cdot \Delta_{\dv^q}. \label{eq:s=far}
\end{align}
 Assume that $s$ is not in $\opt_{\dv^q}$ and define the $k$-set $\opt'_{\dv^q}:=\opt_{\dv^q}\cup\{s\}\setminus \{z_0\}$. 
We will show for each diversity function that $\dv^q(\opt'_{\dv^q})> \dv^q(\opt_{\dv^q})$, thus contradicting the optimality of $\opt_{\dv^q}$. To simplify notation in the remainder of the proof, we make the corresponding function clear from context and 
 remove the subscripts $\dv^q$. 
 
For remote-clique, we have
 \begin{align*}
  \cl^q(\opt')-\cl^q(\opt) & = \sum_{u\in \opt\setminus \{z_0\}} \big[d^q(s,u)-d^q(z_0,u)\big] \\
			& \geq \sum_{u\in \opt\setminus \{z_0\}} \big[2^{-(q-1)} d^q(z_0,s) - 2d^q(z_0,u)\big]  & \text{(by~\eqref{eq:triangle2})}\\
			& = \frac{k-1}{2^{q-1}} d^q(z_0,s) - 2 \cdot \st^q(\opt) & \text{(by choice of $z_0$)} \\
		       & > \frac{k-1}{2^{q-1}} (2^q \Delta) - 2\cdot \frac{2}{k} \cdot \cl^q(\opt)	& \text{(by \eqref{eq:s=far} and \eqref{eq:relation1})} \\
		       & = 2(k-1)\Delta - 2(k-1)\Delta = 0 & \text{(by def.~of $\Delta$)}.
 \end{align*}

For remote-star, let $z$ be the center of the minimum weight spanning star in $\opt'$ so that $\st^q(\opt')=d^q(z, s) + \sum_{u\in \opt\setminus\{z_0\}} d^q(z,u)$. We claim that 
\begin{align}
d^q(z_0, z) \leq 2^q \Delta, \label{eq:z=close}
\end{align} 
as otherwise we obtain
 \begin{align*}
  \st^q(\opt)+\st^q(\opt') & = d^q(z, s) + \sum_{u\in \opt\setminus\{z_0\}} \big[ d^q(z_0,u) + d^q(z,u) \big] \\
			&\geq 2^{-(q-1)} \sum_{u\in\opt\setminus\{z_0\}} d^q(z_0,z) & \text{(by~\eqref{eq:triangle1})} \\
			& > \frac{k-1}{2^{q-1}} (2^q \Delta) = 2(k-1) \Delta =  2 \cdot \st^q(\opt) & \text{(negating \eqref{eq:z=close}).}
 \end{align*}
Inequality~\eqref{eq:z=close} implies in particular that $z\neq s$, hence $z\in\opt$. 
Notice by the minimality of the remote-star function that $\st^q(\opt)\leq \sum_{u\in \opt} d^q(z,u)$. 
By inequalities \eqref{eq:triangle2}, \eqref{eq:s=far} and \eqref{eq:z=close}, we obtain
\begin{align*}
 \st^q(\opt') - \st^q(\opt) & \geq \sum_{u\in \opt'} d^q(z,u) - \sum_{u\in \opt} d^q(z,u) = d^q(z,s) - d^q(z,z_0) \\
			& \geq 2^{-(q-1)} d^q (z_0,s) - 2d^q(z_0,z) \\
                               & > 2^{-(q-1)} (4^q \Delta) - 2 (2^q \Delta) =0. 
\end{align*}

For remote-bipartition, let $\opt'=L'\cup R$ be the minimum weight bipartition of $\opt'$ so that $\bp^q(\opt') = \sum_{\ell\in L', r\in R} d^q(\ell,r)$. Assume without loss of generality that $s\in L'$. We claim that 
\begin{align}
\sum_{r\in R} d^q(z_0,r) \leq \frac{2^q+1}{2} k\Delta, \label{eq:R=close}
\end{align}  
as otherwise we obtain 
\begin{align*}
 \bp^q(\opt) & \geq \frac{1}{2^q+1} \cl^q(\opt) \geq \frac{k}{2(2^q+1)} \st^q(\opt) & \text{(by \eqref{eq:relation2} and \eqref{eq:relation1})}\\
	    &= \frac{k}{2(2^q+1)} \sum_{u\in \opt} d^q(z_0,u) \geq \frac{k}{2(2^q+1)} \sum_{r\in R} d^q(z_0,r)  & \text{(as $R\subseteq \opt$)}\\
	    & > \frac{k}{2(2^q+1)} \cdot \frac{2^q+1}{2} k\Delta = \frac{k^2}{4} \Delta = \bp^q(\opt) & \text{(negating \eqref{eq:R=close}).}
\end{align*}
Define $L:=L'\cup \{z_0\}\setminus \{s\}$ and notice that $L\cup R=\opt$. By the minimality of the remote-bipartition function, 
$\bp^q(\opt)\leq \sum_{\ell \in L} \sum_{r\in R} d^q(\ell,r)$. Hence,
\begin{align*}
\bp^q(\opt') - \bp^q(\opt) &\geq  \sum_{\ell \in L'} \sum_{r\in R} d^q(\ell,r) - \sum_{\ell \in L} \sum_{r\in R} d^q(\ell,r) \\
			  &= \sum_{r\in R} \big[ d^q(s,r) - d^q(z_0,r) \big] \\
			  &\geq \sum_{r\in R} \big[ 2^{-(q-1)} d^q (z_0,s) - 2 d^q(z_0,r) \big]  & \text{(by~\eqref{eq:triangle2})} \\
			  &> \frac{|R|}{2^{q-1}} (6^q \Delta) - 2\sum_{r\in R} d^q(z_0,r) & \text{(by \eqref{eq:s=far})} \\
			& \geq 3^q k \cdot \Delta  - (2^q+1)k \cdot \Delta \geq 0. & \text{(by \eqref{eq:R=close})}.
\end{align*}
This completes the proof of the theorem.
\end{proof}

\subsection{The algorithm} \label{ss:algorithm}

For any diversity function and a fixed optimal $k$-set, we refer to the ball $B:=B(z_0,c \Delta^{1/q})$ defined in 
Theorem~\ref{thm:far=opt} as the \emph{main cluster} and to $z_0$ as the \emph{instance center}. 
Our algorithm consists of two phases: Finding the main cluster $B$ and performing grid rounding on $B$. 
We remark that for a well-dispersed instance (such as the one on the left-hand side of Figure~\ref{fig:twostars}), 
$B$ may well contain all input points. In that case, our algorithm amounts to a direct application of the grid rounding procedure.  

\subsubsection*{Finding the main cluster}\label{ss:cluster}

There are several possible ways to (approximately) find $B$. For simplicity, we present a naive approach based on exhaustive search. 
A more efficient technique is described in Section~\ref{s:PTAS2}, where we provide a faster algorithm for standard remote-clique. 

Assuming without loss of generality that the instance diameter is 1, we obtain for each diversity function the bounds 
$1/k^2 \leq \Delta^{1/q} \leq 1$. Hence, by performing $O(\log k)$ trials, we can ``guess'' the value of $\Delta^{1/q}$ 
up to a constant factor arbitrarily close to one, which means that for any constant $\lambda>0$, 
we can find an estimate $\Delta'$ so that $(1-\lambda) \Delta^{1/q}\leq \Delta'^{1/q}\leq \Delta^{1/q}$.   
Similarly, by trying out all $n$ input points, we can ``guess'' the instance center $z_0$. 
For each one of these guesses, we perform the second phase (described in the next paragraph) 
and output the best $k$-set found over all trials. 
To simplify our exposition, we assume in what follows that we have found $\Delta^{1/q}$ and $z_0$ (and thus $B$) exactly. 
Our analysis can be adapted to any constant-factor estimation of $\Delta^{1/q}$,  
as it is enough to find a slightly larger ball $B'$ containing $B$ and to slightly change the value of constant $c$. More precisely, if we have an estimate $\Delta'$ so that $(1-\lambda) \Delta^{1/q}\leq \Delta'^{1/q}\leq \Delta^{1/q}$ 
and we set $c':=\frac{c}{1-\lambda}$, then $B':=B(z_0, c' \Delta'^{1/q})$ is guaranteed to contain $B$ 
and hence all points outside of $B'$ are in $\opt$.

\subsubsection*{Rounding the cluster}\label{ss:rounding}

We now assume that we have found the main cluster $B$ (see the previous paragraph).
For a constant $\delta>0$ to be defined later, with $1/\delta=\Theta(2^q/\epsilon)$, 
we perform a cell decomposition of radius $\delta \Delta^{1/q}$ over $B$. 
As the radius of ball $B$ is $c\Delta^{1/q}$, this decomposition produces at most 
$\big(4\cdot \frac{c \Delta^{1/q}}{\delta \Delta^{1/q}}\big)^D= (4c / \delta)^D = O(2^q/\epsilon)^D$ cells, i.e. constantly many cells. 
Let $\pi:B\rightarrow B$ be the function that maps each point to its cell center. 
For notational convenience, we extend this into a function $\pi:X\rightarrow X$ by applying the identity on $X\setminus B=:\bar{B}$  
(and thinking of each point in $\bar{B}$ as the center of its own cell). 

Recall that for any set $T\subseteq X$, $\hat{\pi}(T)$ denotes the multiset over set 
$\pi(T)$ having multiplicities $|\pi^{-1}(u)\cap T|$ for each $u\in \pi(T)$. Next, we perform exhaustive search to find a $k$-set $T$ in $X$ with the property that 
\begin{align}\label{eq:roundedOPT}
\dv^q(\hat{\pi}(T))\geq \dv^q(\hat{\pi}(\opt)).
\end{align}

This can be done in polynomial time as follows: We try out all multisets in $\hat{\pi}(X)$ that 
a) contain $\bar{B}$ and b) have cardinality $k$ counting multiplicities. 
Then, we keep the multiset with largest diversity and return any $k$-set $T$ that is a pre-image of this multiset. 
Clearly, this search considers only $k^{O(2^q/\epsilon)^D}$ multisets and is bound to consider $\hat{\pi}(\opt)$.

As mentioned in the introduction, our algorithm assumes access to a polynomial-time oracle that, for any $k$-set $T$, 
returns the value of $\dv^q(T)$ or a $(1+\epsilon)$-factor estimate of it which is sufficient for our purposes. 
The use of this estimate produces a corresponding small deterioration in our final approximation guarantee, 
but for simplicity we ignore this in the remainder.
No exact efficient algorithm is known to compute $\bp^q(T)$ for a given $k$-set $T$. 
However, we provide a PTAS for this problem in Section~\ref{s:BP}.

\subsection{Analysis}\label{ss:analysis}

What is the approximation guarantee of our algorithm? 
We now show how the analysis of the grid rounding heuristic can be adapted to the $q$-th power of metric distances. 
By an application of inequality~\eqref{eq:triangle3}, our cell decomposition gives the following guarantee for each pair of points.

\begin{lemma}\label{lem:mapping}
 Let $\pi:X\rightarrow X$ be a map such that $d(u,\pi(u))\leq \delta \Delta^{1/q}$ for each $u$ in $X$. Then, for any pair of points $u,v\in X$,
 \begin{align*}
  |d^q(u,v) - d^q(\pi(u), \pi(v))| \leq 2^{q+1} \delta \cdot \big(\Delta + \min\{d^q(u,v), d^q(\pi(u), \pi(v))\} \big). 
 \end{align*}
\end{lemma}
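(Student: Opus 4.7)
The plan is to reduce the $q$-th power comparison to an ordinary (first-power) comparison via the triangle inequality, and then lift that bound using inequality \eqref{eq:triangle3} from Lemma~\ref{lem:relaxedtriangle}. Writing $D_1:=d(u,v)$ and $D_2:=d(\pi(u),\pi(v))$, the target inequality is symmetric in $D_1$ and $D_2$ (it involves $|D_1^q-D_2^q|$ and $\min\{D_1^q,D_2^q\}$), so I would assume without loss of generality that $D_1\leq D_2$.

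First I would apply the triangle inequality twice, using $d(u,\pi(u)),d(v,\pi(v))\leq \delta\Delta^{1/q}$, to obtain $D_2\leq D_1+2\delta\Delta^{1/q}$ (and the symmetric bound). This is the only place where the hypothesis on $\pi$ is used, and it reduces the problem to bounding $D_2^q-D_1^q$ given $D_2\leq D_1+2\delta\Delta^{1/q}$.

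Next I would invoke \eqref{eq:triangle3} with $x=D_1$, $y=\Delta^{1/q}$ and $\epsilon=2\delta$. Provided $\delta\leq 1/2$ so that $\epsilon\in[0,1]$ (the regime of algorithmic interest, since the algorithm uses $1/\delta=\Theta(2^q/\epsilon)$), this gives
\[
D_2^q \;\leq\; (D_1+2\delta\Delta^{1/q})^q \;\leq\; D_1^q + 2^{q+1}\delta\cdot\max\{D_1^q,\Delta\}.
\]
Using the trivial bound $\max\{D_1^q,\Delta\}\leq D_1^q+\Delta$ and the fact that $D_1^q=\min\{D_1^q,D_2^q\}$ under our WLOG assumption, I would conclude
\[
D_2^q-D_1^q \;\leq\; 2^{q+1}\delta\bigl(\Delta+\min\{D_1^q,D_2^q\}\bigr),
\]
which is precisely the claim.

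I do not expect a serious obstacle here: the only subtle point is the requirement $\epsilon\leq 1$ when applying \eqref{eq:triangle3}, which is handled by the assumption $\delta\leq 1/2$. The overall proof is a short two-step computation --- a metric triangle inequality followed by a single application of the ``relaxed triangle inequality'' for powers --- and should fit in a few lines.
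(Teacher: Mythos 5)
Your proof is correct and is essentially identical to the paper's: the paper also reduces to the case $d(u,v)\leq d(\pi(u),\pi(v))$ (and its mirror), applies the triangle inequality to get $d(\pi(u),\pi(v))\leq d(u,v)+2\delta\Delta^{1/q}$, and then invokes inequality \eqref{eq:triangle3} with $\epsilon=2\delta$ followed by $\max\{\cdot,\cdot\}\leq$ sum. Your explicit remark that \eqref{eq:triangle3} requires $2\delta\leq 1$ is a point the paper leaves implicit, but it holds in the regime where the lemma is used.
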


\begin{proof}
 We consider two cases. If $d(u,v)\leq d(\pi(u), \pi(v))$, we have by hypothesis 
 \begin{align*}
  d^q(\pi(u),\pi(v)) & \leq [d(\pi(u),u)+d(u,v)+d(v,\pi(v))]^q \leq [d(u,v)+ 2\delta \Delta^{1/q}]^q \\
	   & \leq d^q(u,v) + 2^{q+1}\delta \cdot \max\{\Delta, d^q(u,v)\} \leq d^q(u,v) + 2^{q+1}\delta \cdot \big( \Delta + d^q(u,v)\big), 
 \end{align*}
 where we used inequality~\eqref{eq:triangle3} in the second line. This proves the claim.

 Similarly, if $d(\pi(u), \pi(v)) < d(u,v)$, then
 \[
  d^q(u,v) \leq d^q(\pi(u),\pi(v)) + 2^{q+1}\delta \cdot \big(\Delta + d^q(\pi(u),\pi(v)) \big),
 \]
which again proves the claim.
\end{proof}

Lemma \ref{lem:mapping}, together with the definition of $\Delta$, implies the following result 
whose proof is deferred to Appendix~\ref{s:proofs}.

\begin{lemma}\label{lem:roundedset}
 Let $\pi:X\rightarrow X$ be a map such that $d(u,\pi(u))\leq \delta \Delta^{1/q}$ for each $u$ in $X$. 
 Then, for each one of our three diversity functions and for each $k$-set $T\subseteq X$,
 \begin{align*}
  |\dv^q(T) - \dv^q(\hat{\pi}(T))| \leq 2^{q+1}\delta \cdot \big[ \dv^q(\opt)+\dv^q(T) \big] \leq 2^{q+2}\delta \cdot \dv^q(\opt).
 \end{align*}
\end{lemma}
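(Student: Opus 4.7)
}

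The plan is to apply Lemma~\ref{lem:mapping} pairwise to each term in the combinatorial sums defining $\dv^q$, and then to absorb the additive $\Delta$-terms using the definition of the average optimal value. The key observation is that for each diversity function, the optimal value $\dv^q(\opt)$ equals $N \cdot \Delta$, where $N$ is the number of pairs summed (namely $\binom{k}{2}$, $k-1$, and $k^2/4$ for clique, star, and bipartition respectively). Consequently, a contribution of $2^{q+1}\delta\,\Delta$ per pair aggregates to exactly $2^{q+1}\delta\,\dv^q(\opt)$.

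First I would handle remote-clique, which is the cleanest. Summing Lemma~\ref{lem:mapping} over all $\binom{k}{2}$ pairs in $T$ gives
\[
|\cl^q(T)-\cl^q(\hat\pi(T))| \leq 2^{q+1}\delta\,\binom{k}{2}\Delta_{\cl^q} + 2^{q+1}\delta\sum_{\{u,v\}\in\binom{T}{2}}\min\{d^q(u,v),d^q(\pi(u),\pi(v))\},
\]
and dropping the $\min$ in favor of $d^q(u,v)$ bounds the second sum by $\cl^q(T)$. Together with $\binom{k}{2}\Delta_{\cl^q}=\cl^q(\opt)$, this yields the first inequality.

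Next I would handle remote-star and remote-bipartition, where the additional subtlety is that the minimizing center (resp.\ bipartition) may differ between $T$ and $\hat\pi(T)$. I would bound each direction of the absolute value separately, using the feasibility trick of transporting the optimizer from one side to the other. For star, if $z^*$ attains $\st^q(T)$, then its image $\pi(z^*)$ is a feasible center in $\hat\pi(T)$, so $\st^q(\hat\pi(T))\leq\sum_{u\in T\setminus\{z^*\}}d^q(\pi(z^*),\pi(u))$; applying Lemma~\ref{lem:mapping} to these $k-1$ terms and using $(k-1)\Delta_{\st^q}=\st^q(\opt)$ gives one direction. The reverse direction is symmetric: lift the optimal center of $\hat\pi(T)$ to its preimage in $T$. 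Remote-bipartition is entirely analogous, summing over the $k^2/4$ cross-pairs and using $(k^2/4)\Delta_{\bp^q}=\bp^q(\opt)$; here the bipartition of $T$ inherited from $\hat\pi(T)$ (and vice versa) plays the role of the transported optimizer.

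Finally, the second inequality $\leq 2^{q+2}\delta\,\dv^q(\opt)$ is immediate from the optimality of $\opt$, which gives $\dv^q(T)\leq\dv^q(\opt)$ for every feasible $k$-set $T$. I don't anticipate a serious obstacle; the only care required is keeping track of which side of the $\min$ in Lemma~\ref{lem:mapping} to drop (always picking the $T$-side so the residual sum is exactly $\dv^q(T)$), and verifying that the "transport the optimizer" argument goes through for both star and bipartition despite the multiset nature of $\hat\pi(T)$.
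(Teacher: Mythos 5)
Your proposal matches the paper's proof essentially step for step: summing Lemma~\ref{lem:mapping} over the relevant pairs, using that the number of terms times $\Delta$ equals $\dv^q(\opt)$, and handling remote-star and remote-bipartition by a two-case analysis that transports the minimizing center (resp.\ bipartition) between $T$ and $\hat{\pi}(T)$. The concluding inequality via $\dv^q(T)\leq\dv^q(\opt)$ is also the paper's argument, so there is nothing to add.
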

Applying the previous lemma twice as well as inequality~\eqref{eq:roundedOPT} once, we conclude that 
\begin{align*}
 \dv^q(T) &\geq \dv^q(\hat{\pi}(T)) - 2^{q+2} \delta\cdot \dv^q(\opt) \geq \dv^q(\hat{\pi}(\opt)) - 2^{q+2} \delta\cdot \dv^q(\opt) \\
	  &\geq \dv^q(\opt) - 2^{q+3} \delta\cdot \dv^q(\opt) = (1-2^{q+3}\delta )\cdot \dv^q(\opt). 
\end{align*}
Hence, in order to achieve an approximation factor of $1-\epsilon$, it suffices to select $\delta:=\epsilon / 2^{q+3}$. 
The number of cells produced by the cell decomposition is thus bounded by $(2^{q+5}c/\epsilon)^D = O(2^q/\epsilon)^D$.
This completes the analysis of our algorithm and the proof of Theorem~\ref{thm:PTAS1}.

\section{A faster PTAS for standard remote-clique}\label{s:PTAS2}

We next describe a faster algorithm for the standard ($q=1$) remote-clique problem and prove the following result. 
\begin{theorem}\label{thm:PTAS2}
For any constant $\epsilon>0$, standard remote-clique over an $n$-point metric space of constant doubling dimension $D$ 
admits a $(1-O(\epsilon))$-approximation algorithm in time
\[
 O(n(k+\epsilon^{-D}))+ (\epsilon^{-1} \log k)^{O(\epsilon^{-D})}\cdot k, 
\]
assuming that distance evaluations take constant time.
\end{theorem}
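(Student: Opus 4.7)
The strategy is to accelerate the two bottlenecks of the generic algorithm from Section~\ref{s:PTAS1} for the case $q=1$: the $O(n\log k)$ naive guessing of the instance center $z_0$ and value $\Delta$, and the $k^{O(\epsilon^{-D})}$ exhaustive search over rounded multiplicity configurations. The plan is to replace the first by a one-shot preprocessing step yielding a small pool of candidate centers, and the second by a search over a logarithmically-spaced grid of multiplicities.

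First, I would invoke a standard greedy $1/2$-approximation for remote-clique (such as the algorithm of~\cite{hassin1997approximation} or~\cite{ravi1994heuristic}) to produce in time $O(nk)$ a $k$-set $T_0$ with $\cl(T_0)\geq \cl(\opt)/2$. This immediately yields a constant-factor estimate $\Delta' := 2\cl(T_0)/(k(k-1))\in [\Delta/2,\Delta]$, eliminating the $O(\log k)$ trials for $\Delta$. Applying Theorem~\ref{thm:far=opt} with $c_{\cl}=2$, I would then argue that either $T_0\subseteq \opt$ (and hence $T_0=\opt$, which is trivially detected), or $T_0\cap B(z_0,2\Delta)\neq\emptyset$. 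In the latter case any point $p$ in that intersection satisfies $d(p,z_0)\leq 2\Delta$, so the slightly enlarged ball $B(p, c'\Delta')$ covers the main cluster. Hence it suffices to try each $p\in T_0$ as a proxy center $z_0'$. Precomputing the distances $d(u,p)$ for all $(u,p)\in X\times T_0$ in $O(nk)$ time, both the identification of outliers and the cell decomposition around each candidate $p$ can be performed by table lookups, so the total cost of the $k$ decompositions is $O(n\epsilon^{-D})$ amortized, matching the first term of the stated complexity.

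Second, after producing $M=O(\epsilon^{-D})$ cells around a given candidate center, I would restrict each inner-cluster multiplicity $m_i$ to the log-grid $\{0,1,\lfloor(1+\epsilon)\rfloor,\lfloor(1+\epsilon)^2\rfloor,\ldots\}$ of size $O(\epsilon^{-1}\log k)$, so that only $(\epsilon^{-1}\log k)^M$ configurations need to be tried. For each configuration, evaluating $\cl(\hat{\pi}(T))$ decomposes into an outlier-outlier constant, an outlier-to-cluster linear term computed from precomputed sums $s_v=\sum_{u\in \bar B} d(v,u)$ (built once in $O(n\epsilon^{-D})$ time), and an intra-cluster quadratic term of size $O(M^2)$. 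The residual factor $k$ in the stated running time accounts for reconstructing the final $k$-set from the best multiplicity vector and for handling the $O(k)$ outliers across the candidate centers.

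The main obstacle is proving that the log-coarsening of multiplicities sacrifices at most a $(1-O(\epsilon))$ factor of the objective. Rounding each optimal multiplicity $m_i^\star$ down to the nearest grid value $\tilde m_i$ gives $\tilde m_i\geq m_i^\star/(1+\epsilon)$; since $\cl$ is bilinear in the multiplicities, this loses at most a factor $(1+\epsilon)^2$ in the bilinear form. However, the rounded vector may undershoot the cardinality constraint by up to $O(\epsilon k)$ points, which must be redistributed among cells while preserving near-optimality. I would show, in the spirit of Lemma~\ref{lem:roundedset}, that bumping up some multiplicities by one grid step to restore $\sum m_i=k$ changes the objective by at most $O(\epsilon)\cdot \cl(\opt)$, combining this with the $(1+\epsilon)^2$ rounding loss to obtain the final $(1-O(\epsilon))$ approximation guarantee.
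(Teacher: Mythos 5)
Your overall architecture matches the paper's: a greedy $O(nk)$ constant-factor estimate of $\Delta$, identification of a main cluster whose complement lies in $\opt$ (Theorem~\ref{thm:far=opt}), a cell decomposition of radius $O(\epsilon\Delta)$, a restricted search over logarithmically spaced multiplicity vectors, and the three-term decomposition of $\cl(m)$ so that each configuration is evaluated in $O(k)$ time. Your last worry is also a non-issue: since remote-clique is a maximization problem with a nonnegative bilinear objective, padding the rounded vector up to cardinality $k$ can only increase $\cl(m)$, so the entrywise bound $m'_u\geq(1-\epsilon/2)m^*_u$ alone gives $\cl(m')\geq(1-\epsilon/2)^2\cl(m^*)$; no bound on the effect of the redistribution is needed.

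The genuine gap is in how you locate the proxy center, and it breaks the claimed running time. You propose to try every $p\in T_0$ as a candidate center, which means running the cell decomposition and, crucially, the $(\epsilon^{-1}\log k)^{O(\epsilon^{-D})}$-size restricted search $k$ times; this inflates the second term of the bound to $(\epsilon^{-1}\log k)^{O(\epsilon^{-D})}\cdot k^2$, which is not within the stated complexity. The ``amortized $O(n\epsilon^{-D})$'' claim for the $k$ decompositions is also unjustified: the table of distances $X\times T_0$ does not give you the distances from points of $X$ to the cell \emph{centers} (which are arbitrary points of $B(p,c'\Delta')$, not points of $T_0$), so each greedy decomposition still costs up to $O(n\epsilon^{-D})$, i.e.\ $O(nk\epsilon^{-D})$ in total. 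What is missing is a cheap certificate that singles out \emph{one} good center before the expensive phase. The paper obtains this from a counting property of the main cluster (Lemma~\ref{lem:cluster}): $B=B(z_0,2\Delta)$ contains more than $n-k/2$ points, and conversely any ball of radius $r$ containing more than $n-k/2$ points has its center within $2\Delta+r$ of $z_0$. It therefore performs a \emph{single} global cell decomposition (which by the same lemma has only $O(k+\epsilon^{-D})$ cells), scans the cell centers for one $z_0'$ with $|X\setminus B(z_0',5\Delta')|<k/2$, and runs the restricted search once around $B(z_0',13\Delta')$. You could repair your variant by applying the same majority test to the candidates in $T_0$, but as written the factor-$k$ blowup stands. (Separately, ``$T_0\subseteq\opt$ is trivially detected'' is not literally true; the correct fix is simply to keep $T_0$ as one of the candidate outputs.)
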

Notice that this is a linear-time efficient PTAS in the case where $k$ is considered constant. 
As $k$ is usually much smaller than $n$, this is a very desirable running time for applications with large $n$. 
We provide an improved algorithm for this particular problem for the sake of simplicity and because of its popularity, 
but point out that the additional techniques presented below are mostly standard and can be applied in varying degrees 
to the other two diversity problems and to a general constant $q\geq 1$. 

Recall that the algorithm presented in Section~\ref{s:PTAS1} is comprised of the following steps.
\begin{enumerate}
 \item ``Guess'' a constant-factor estimate $\Delta'$ of $\Delta$ (by exhaustive search).
 \item ``Guess'' the instance center $z_0$ (by exhaustive search).
 \item Define a ball $B'$ such that a) $\bar{B'}\subseteq \bar{B} \subseteq \opt$ and b) its radius is $O(\Delta)$. 
 \item Compute a cell decomposition of radius $O(\epsilon \Delta)$ over $B'$. 
 \item Perform exhaustive search, in the rounded instance, over the solutions that contain $\bar{B'}$.
\end{enumerate}

Notice that steps 3 to 5 are repeated for each combination of guesses made in the first two steps. 
In contrast, the new algorithm performs each step only once in a linear fashion. 
To achieve this, instead of finding the exact instance center $z_0$ we opt to approximate it by a point $z_0'$ that is close enough in distance. 
We perform a cell decomposition over the whole input \emph{before} searching for the approximate center $z_0'$ in order to decrease the search space. 
Finally, in the last step we reduce the search space again to find a close-to-optimal solution in the rounded instance.  

The new algorithm has the following steps.
\begin{enumerate}
 \item Compute a constant-factor estimate $\Delta'$ of $\Delta$.
 \item Compute a cell decomposition of radius $O(\epsilon \Delta)$ over the whole instance $X$. 
 \item Find a cell center $z_0'$ close to $z_0$, with $d(z_0,z_0')=O(\Delta)$. 
 \item Define a ball $B'$ such that a) $\bar{B'}\subseteq \bar{B} \subseteq  \opt$ and b) its radius is $O(\Delta)$.
 \item Perform a restricted search in the rounded instance, over the solutions that contain $\bar{B'}$.
\end{enumerate}



We start by stating a property of the main cluster $B$ which ensures that it is the unique highly-clustered region in the instance 
and hence it can easily be detected.

\begin{lemma}\label{lem:cluster} 
Given an instance of remote-clique, let $z_0$ and $B=B(z_0, 2\Delta)$ be the instance center and main cluster 
as defined in Theorem~\ref{thm:far=opt}. Then, $|\bar{B} | < k/2$. 

Moreover, for any ball $B(u,r)$ with $|X \setminus B(u,r)| < k/2$, we have $d(z_0,u)\leq 2\Delta+r$.
\end{lemma}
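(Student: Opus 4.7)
\medskip
\noindent\textbf{Proof plan.}
The plan is to combine Theorem~\ref{thm:far=opt} (which gives $\bar{B}\subseteq\opt$) with a bound on the remote-star value of $\opt$ for the first claim, and then to use a simple pigeonhole plus triangle-inequality argument for the ``moreover'' part. No deep new idea is needed; the proof is really a short accounting.

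For the first claim, I would start by recalling that in Theorem~\ref{thm:far=opt} the point $z_0$ is chosen as the center of the minimum-weight spanning star of $\opt$, so $\st(\opt)=\sum_{u\in\opt} d(z_0,u)$. Combining this with Lemma~\ref{lem:objectiverelations} for $q=1$ yields
\[
\sum_{u\in\opt} d(z_0,u) \;=\; \st(\opt) \;\leq\; \tfrac{2}{k}\,\cl(\opt) \;=\; \tfrac{2}{k}\binom{k}{2}\Delta \;=\; (k-1)\Delta.
\]
Now assume for contradiction that $|\bar{B}|\geq k/2$. Since every point of $\bar{B}$ satisfies $d(z_0,u)>2\Delta$ by definition of $B$, and since $\bar{B}\subseteq\opt$ by Theorem~\ref{thm:far=opt} (note $z_0\in B$, so this sum is genuinely over points different from $z_0$), I get
\[
(k-1)\Delta \;\geq\; \sum_{u\in\opt} d(z_0,u) \;\geq\; \sum_{u\in\bar{B}} d(z_0,u) \;>\; 2\Delta\cdot|\bar{B}| \;\geq\; k\Delta,
\]
which is a contradiction. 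Hence $|\bar{B}|<k/2$.

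For the ``moreover'' part, I would argue by pigeonhole. Let $B(u,r)$ be any ball with $|X\setminus B(u,r)|<k/2$. By the first claim, $|\bar{B}|+|X\setminus B(u,r)|<k\leq n=|X|$, so the union $\bar{B}\cup (X\setminus B(u,r))$ cannot cover $X$. Therefore there exists a point $v\in B\cap B(u,r)$, which satisfies $d(z_0,v)\leq 2\Delta$ and $d(v,u)\leq r$. The triangle inequality then gives $d(z_0,u)\leq d(z_0,v)+d(v,u)\leq 2\Delta+r$, as required.

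The only mildly subtle point is making sure that the constants line up: the main cluster is $B=B(z_0,2\Delta)$ with the strict inequality ``$d(z_0,u)>2\Delta$'' for points in $\bar{B}$, and the bound $\st(\opt)\leq (k-1)\Delta$ is strictly less than $k\Delta$, so the strict inequality $|\bar{B}|<k/2$ indeed follows. Apart from this bookkeeping, there is no real obstacle: the whole argument is a direct consequence of the structural statement of Theorem~\ref{thm:far=opt} together with the star/clique comparison of Lemma~\ref{lem:objectiverelations}.
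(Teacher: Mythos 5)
Your proof is correct and follows essentially the same route as the paper: both parts rest on $\bar{B}\subseteq\opt$ from Theorem~\ref{thm:far=opt}, the star/clique comparison \eqref{eq:relation1}, and the observation that two balls each missing fewer than $k/2$ of the $n\geq k$ points must intersect. The only cosmetic difference is that you bound $\st(\opt)\leq(k-1)\Delta$ up front and contradict it directly, whereas the paper derives $\st(\opt)>k\Delta$ and passes back through $\cl(\opt)$; the content is identical.
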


\begin{proof} 
Recall that point $z_0$ is chosen to be the center of the min-weight spanning star in $\opt_\cl$. 
By Theorem~\ref{thm:far=opt}, all points in $\bar{B}$ are in $\opt_\cl$, so if the first claim is false, 
we would have at least $k/2$ optimal points each at a distance $>2\Delta$ from $z_0$. Hence, 
 \[
  \st(\opt_\cl) = \sum_{u\in \opt_\cl} d(z_0,u) > \frac{k}{2} \cdot (2 \Delta) = k \cdot \Delta.
 \]
We now use inequality~\eqref{eq:relation1} to obtain
\[
 \cl(\opt_\cl) \geq \frac{k}{2} \st(\opt_\cl) > \frac{k^2}{2}\cdot \Delta > \binom{k}{2} \cdot \Delta = \cl(\opt_\cl),
\]
which is a contradiction. The second claim follows from the fact that balls $B(z_0, 2\Delta)$ and $B(u,r)$ must intersect, 
as each ball contains strictly more than half of the input points.
\end{proof}

We provide a detailed account of the new algorithm in the next two subsections: The first one describes steps 1 to 4 
and the second one describes step 5.

\subsection*{Approximating the main cluster}\label{ss:PTAS2cluster}

We fix a constant error parameter $\epsilon>0$.
The new algorithm finds an approximation $B'$ of the main cluster $B$ via the following steps:

Using the standard greedy algorithm for remote-clique over general metrics~\cite{birnbaum2009improved}, we obtain in time $O(nk)$ an estimate $\Delta'$ of $\Delta$ so that $\frac{1}{2}\Delta\leq \Delta'\leq \Delta$.

Then, for a constant $\delta$ with $1/\delta=\Theta(1/\epsilon)$, 
we compute a cell decomposition of radius $\delta \Delta'$ over the whole input $X$.  
Notice that by Lemma~\ref{lem:cluster}, this cell decomposition produces only $O(\frac{\Delta}{\delta \Delta'})^D = O(1/\epsilon)^D$ 
cell centers inside the main cluster $B$ and at most $k/2$ centers outside of it. 
Hence, it produces only $O(k+\epsilon^{-D})$ cells and it executes in time $O(n(k+\epsilon^{-D}))$. 
Let $\pi:X\rightarrow X$ be the function that maps each point to its cell center.

Next, we iterate over all cell centers in order to find a cell center $z_0'$ with the property that $|X\setminus B(z_0', 5\Delta')|< k/2$. 
Assuming that $\delta\leq 1$, Lemma~\ref{lem:cluster} guarantees the existence of at least one such cell center -- namely $\pi(z_0)$. 
This is because 
\[
5\Delta' \geq  \delta \Delta' + 4\Delta' \geq  d(z_0, \pi(z_0)) + 2\Delta, 
\]
so the ball $B(\pi(z_0), 5\Delta')$ contains $B=B(z_0, 2\Delta)$ which contains the required number of points.
This search takes time $O(n(k+\epsilon^{-D}))$. 

Once we have found such a cell center $z_0'$, we define the ball $B':=B(z_0', 13\Delta')$. 
By Lemma~\ref{lem:cluster}, we have $d(z_0, z_0')\leq 2\Delta + 5\Delta' \leq 9 \Delta'$. Therefore, 
\[
 13\Delta' \geq d(z_0, z_0') + 4\Delta' \geq d(z_0, z_0') + 2\Delta,
\]
which implies that $B'$ contains $B=B(z_0, 2\Delta)$, and so $\bar{B'} \subseteq \bar{B} \subseteq \opt$ by Theorem~\ref{thm:far=opt}.

We now invoke Lemma~\ref{lem:roundedset} to conclude that, since the radius of the cell decomposition is $\delta \Delta' \leq \delta \Delta$, 
for any $k$-set $T\subseteq X$, we have
\begin{align}\label{eq:roundedset2}
|\cl(T) - \cl(\hat{\pi}(T))| \leq 8\delta \cdot \cl(\opt) = O(\epsilon)\cdot \cl(\opt).\footnotemark
\end{align}
For the case $q=1$, Lemmas \ref{lem:mapping} and \ref{lem:roundedset} simplify and this bound can easily be improved to $2 \delta \cdot \cl(\opt)$. 

For simplicity, in what follows we assume that the images $\pi(B')$ and $\pi(\bar{B'})$ are disjoint. 
One possible way to achieve this is to enlarge set $B'$ so that it fully contains all cells that intersect with it.  
In any case, it is clear that the size of $\pi(B')$ is $O(\frac{13\Delta'}{\delta \Delta'})^D = O(\epsilon^{-D})$. 
The complexity of our algorithm, up to this point, is $O(n(k+\epsilon^{-D}))$.

\subsection*{Restricted search over the rounded instance} \label{ss:PTAS2search}

The last step of our algorithm is finding a $k$-set $T$ that contains $\bar{B'}$ and such that 
\begin{align}
  \cl(\hat{\pi}(T))\geq (1-\epsilon) \cdot \cl(\hat{\pi}(\opt)). \label{eq:approxOpt}
\end{align}
From inequalities~\eqref{eq:roundedset2} and \eqref{eq:approxOpt}, it easily follows that $\cl(T)\geq (1-O(\epsilon))\cdot \cl(\opt)$, as desired. 

For notational convenience, we represent a multiset over $\pi(X)$ by its vector of multiplicities $m\in\Z_{\geq 0}^{\pi(X)}$. 
For such a vector $m$, the remote-clique function is defined as 
\[
 \cl(m) = \frac{1}{2} \sum_{u,v\in \pi(X)} m_u m_v d(u,v).
\]
In the algorithm of the previous section, we consider all vectors $m\in\Z_{\geq 0}^{\pi(X)}$ of cardinality $\|m\|_1=k$ such that $m_u \leq \min\{|\pi^{-1}(u)|, k\}$ for each $u\in \pi(B')$ and $m_u = |\pi^{-1}(u)|$ for each $u\in \pi(\bar{B'})$. 
Once we find the vector $m$ of largest value $\cl(m)$, we find and return a corresponding pre-image, i.e. 
a $k$-set $T\subseteq X$ such that $m_u = |\pi^{-1}(u) \cap T| \ \forall u\in \pi(X)$. 

The new algorithm considers only a restricted collection of vectors $m\in\Z_{\geq 0}^{\pi(X)}$ defined as follows: 
For each point $u$ in $\pi(B')$, consider the list of multiplicities $(m_u^0, m_u^1, m_u^2, \cdots, 1, 0)$ where $m_u^0=\min\{|\pi^{-1}(u)|, k\}$ and $m^i_u = \min\{\lceil (1-\epsilon/2) m_u^{i-1}\rceil, m_u^{i-1} - 1\}$ for each $i\geq 1$. 
Then, for each vector $m$ with multiplicities coming from these lists in $\pi(B')$, with $m_u = |\pi^{-1}(u)|$ for each $u\in \bar{B'}$ 
and of cardinality $\|m\|_1 \leq k$, we raise its entries arbitrarily until $\|m\|_1=k$, but without exceeding any threshold $|\pi^{-1}(u)|$. Consider this final vector $m$. 

For each point $u$ in $\pi(B')$, the list of multiplicities we consider is of size $O(\epsilon^{-1} \log k)$ and hence the total number of generated vectors $m$ is $(\epsilon^{-1} \log k)^{O(\epsilon^{-D})}$. 
If $m^*$ is the vector of multiplicities of $\hat{\pi}(\opt)$, our search eventually considers a vector $m'$ such that
$m'_u\geq (1 - \epsilon/2) m^*_u$ for each $u\in \pi(X)$. 
The objective value of $m'$ is then
\begin{align*}
  \cl(m') &= \frac{1}{2} \sum_{u,v\in \pi(X)} m'_u m'_v d(u,v) \geq (1-\epsilon/2)^2 \cdot \frac{1}{2}  \sum_{u,v\in \pi(X)} m^*_u m^*_v d(u,v) \\
	    & \geq (1-\epsilon) \cdot \cl(m^*) =  (1-\epsilon) \cdot \cl(\hat{\pi}(\opt)).
 \end{align*}
This shows that our restricted search finds a $k$-set $T$ that observes~\eqref{eq:approxOpt}, as claimed. 

Finally, we argue that for each one of the $(\epsilon^{-1} \log k)^{O(\epsilon^{-D})}$ vectors $m \in\Z_{\geq 0}^{\pi(X)}$ that we generate, 
we can compute its objective value $\cl(m)$ in time linear in $k$. 
Indeed,
\[
 \cl(m) = \frac{1}{2} \sum_{u,v\in \pi(B')} m_u m_vd(u,v) + \sum_{u\in \pi(B'), v\in\pi(\bar{B'})} m_u m_vd(u,v) + \frac{1}{2} \sum_{u,v\in \pi(\bar{B'})} m_u m_v d(u,v).
\]
The first two terms on the right-hand side can be computed in time linear in $k$, 
while the last term (which is quadratic in $k$) is constant over all the generated vectors, so it needs to be computed only once. 
Hence, our algorithm takes a total time of  $O(n(k+ \epsilon^{-D})) + (\epsilon^{-1} \log k)^{O(\epsilon^{-D})} \cdot k$. 
This completes the description of the algorithm and the proof of Theorem~\ref{thm:PTAS2}.

\pagebreak[4]


\section{A PTAS for generalized min-bisection}\label{s:BP}

In this section we present a PTAS for the following problem: Given an even integer $k\geq 4$, a $k$-point metric space $(T,d)$ of constant doubling dimension $D$ and a constant $q\geq 1$, 
find a balanced bipartition $T=L\cup R$ that minimizes the expression $f(L,R):=\sum_{\ell\in L, r\in R} d^q(\ell, r)$. In this section, all mentioned bipartitions are assumed to balanced ($|L|=|R|=k/2$). As stated before, this and other results for bipartitions can be easily adapted to odd values of $k$. 
The minimum value of this expression is precisely $\bp^q(T)$.
When $q=1$, this problem is known as the metric min-bisection problem and it is NP-hard over general metrics~{\cite{fernandez2002polynomial}. 
Fernandez de la Vega et al.~\cite{fernandez2002polynomial} provided a PTAS for it for Euclidean distances 
of fixed dimension. Ours is a generalization of their technique for any constant $q\geq 1$ and any doubling metric space.

\begin{theorem}\label{thm:BP}
 For any constant $q\geq 1$, the $q$-th power version of the min-bisection problem over doubling metric spaces admits a PTAS.
\end{theorem}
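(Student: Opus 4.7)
The plan is to mirror the grid-rounding strategy of Theorem~\ref{thm:PTAS1}, tailored to the minimization structure of remote-bipartition. Fix $\epsilon > 0$ and let $\Delta := \Delta_{\bp^q}$. I would begin by computing a constant-factor estimate of $\Delta^{1/q}$ via $O(\log k)$ trials; Lemma~\ref{lem:objectiverelations} pins $\Delta$ to $\Delta_{\cl^q}$ and $\Delta_{\st^q}$ up to factors of $2^{O(q)}$, and these averages are easily estimated on a $k$-point space.

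The structural observation I would rely on is that, if $z_0 \in T$ denotes the minimum-star-center, Lemma~\ref{lem:objectiverelations} yields $\sum_{u \in T} d^q(z_0, u) = \st^q(T) \leq (2^q+1)k\Delta/2$. Hence for any $c \geq 1$, the main cluster $B := B(z_0, c\Delta^{1/q})$ contains all but $O(k/c^q)$ points of $T$. I would then perform a multi-scale cell decomposition: within $B$ use cells of radius $\delta \Delta^{1/q}$ for $\delta = \Theta(\epsilon/2^q)$, and within each concentric shell $A_j := B(z_0, 2^{j+1}c\Delta^{1/q}) \setminus B(z_0, 2^j c\Delta^{1/q})$ use cells of radius $\delta \cdot 2^j c\Delta^{1/q}$. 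A pairwise application of Lemma~\ref{lem:relaxedtriangle} shows that each $d^q(u,v)$ is perturbed by $O(\delta)\cdot(\Delta + d^q(u,v))$, and summing over the $k^2/4$ pairs of any balanced bipartition the total rounding error is $O(\delta) \cdot \bp^q(T)$. Thus optimizing over the rounded instance suffices for a $(1+O(\epsilon))$-approximation.

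The main obstacle, in contrast with Theorem~\ref{thm:PTAS1}, is that min-bisection has no direct analog of Theorem~\ref{thm:far=opt}: every point must be placed into $L$ or $R$, so neither the outliers nor the interior points are ``forced'' into a specific side. A bipartition in the rounded instance is specified by per-cell split counts $(\ell_u, m_u - \ell_u)$ with $\sum_u \ell_u = k/2$, and the objective is quadratic in the $\ell_u$'s. Since the multi-scale decomposition produces $O(\log k \cdot \delta^{-D})$ cells, naive enumeration of the $(k{+}1)^{O(\log k)}$ split configurations is only quasi-polynomial. I expect the remedy to combine (i) a geometric coarsening of the enumeration of each $\ell_u$, reminiscent of Section~\ref{s:PTAS2}, losing only a multiplicative $(1-\epsilon)$ factor on $\bp^q$, with (ii) a dynamic-programming pass over the shells from outside in, whose state is only the running partial count of $|L|$. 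Verifying that the coarsening does not blow up the perturbation of $\bp^q$, which is delicate because both $\ell_u$ and $m_u - \ell_u$ enter the objective, and that the DP remains polynomial in $k$ across the $O(\log k)$ shells, is the most intricate step of the proof.
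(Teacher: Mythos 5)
Your decomposition phase essentially reproduces the paper's: a constant-factor estimate of $\Delta^{1/q}$, cells of radius $\delta\Delta^{1/q}$ near the min-star center $z$ and radius growing geometrically with the distance from $z$ in the outer shells, giving $O(\epsilon^{-D}\log k)$ cells in total. One caution there: your pointwise claim that each $d^q(u,v)$ is perturbed by $O(\delta)(\Delta+d^q(u,v))$ is false for pairs far from $z$ (two coincident points at distance $R\gg\Delta^{1/q}$ from $z$ can be separated by $\Theta(\delta R)$ after rounding). The correct pointwise bound must carry extra terms of order $\delta\,d^q(z,u)+\delta\,d^q(z,v)$, and the aggregate error is then controlled by $\sum_{u\in T}d^q(z,u)=\st^q(T)\leq \frac{2}{k}\cl^q(T)\leq\frac{2(2^q+1)}{k}\bp^q(T)$ via Lemma~\ref{lem:objectiverelations}. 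This is repairable, but as stated the step fails.

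The genuine gap is in the enumeration phase, precisely the step you flag as ``the most intricate.'' Both remedies you propose would not work. (i) Geometric (multiplicative) coarsening of $\ell_u$, as in Section~\ref{s:PTAS2}, controls $\ell_u$ up to a $(1-\epsilon)$ factor but destroys any multiplicative control of the complement $M_u-\ell_u$: rounding $\ell_u=M_u-1$ down by a factor $(1-\epsilon)$ can multiply $M_u-\ell_u$ by $\Theta(\epsilon M_u)$, and since both sides enter the objective this can blow up $f$. (ii) A dynamic program over shells whose state is only the running count of $|L|$ cannot be correct, because the objective is a quadratic form coupling every pair of cells across shells; the cross terms $\ell_u(M_v-\ell_v)d^q(u,v)$ for $u,v$ in different shells are not determined by a scalar running count. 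The paper's resolution is different and simpler: use an \emph{additive} grid of $O(1/\delta')$ candidate multiplicities per cell at spacing $\lfloor\delta' M_u\rfloor$, so that \emph{both} $\ell_u$ and $M_u-\ell_u$ are approximated to within $\pm\delta' M_u$; the resulting additive error in the objective is at most $2\delta'\sum_{u,v}M_uM_vd^q(u,v)=4\delta'\cl^q(\hat{\pi}(T))\leq 4(2^q+1)\delta'\bp^q(\hat{\pi}(T))$ by~\eqref{eq:relation2}. No DP is needed: with $O(1/\delta')$ candidates per cell and $O(\epsilon^{-D}\log k)$ cells, brute-force enumeration already costs $(1/\delta')^{O(\epsilon^{-D}\log k)}=k^{O(\epsilon^{-D}\log(1/\delta'))}$, which is polynomial in $k$. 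Without this additive-grid idea (or an equivalent), your argument does not close.
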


We recall that this result, used as a black box to achieve an (approximate) oracle of the value of $\bp^q(T)$, 
is required by our main algorithm in Section~\ref{s:PTAS1}.

Fix a constant $q\geq 1$ and an error parameter $\epsilon>0$. 
Our algorithm will output a bipartition of value $(1+O(\epsilon))\cdot \bp^q(T).$
We perform grid rounding: We first apply a cell decomposition to round the input set $T$  
and then execute a restricted search over the solutions of the rounded instance. 
Interestingly, the following cell decomposition has two important differences compared to those used in previous sections: 
The cells have varying radii and the number of cells produced is super-constant in $k$.

\subsection*{Cell decomposition of varying radii}\label{ss:BPcell}

Let $\Delta:=\frac{4}{k^2} \cdot \bp^q(T) $ be the average value in the optimal solution. 
We start by computing a constant-factor estimation of $\Delta^{1/q}$. One way to do it is as follows: 
If $\Delta':= \frac{1}{2^q + 1}\cdot \frac{4}{k^2} \cdot \cl^q(T)$, then the two inequalities in~\eqref{eq:relation2} 
together with $k\geq 4$ ensure that $\frac{1}{2} \cdot \Delta^{1/q} \leq \Delta'^{1/q} \leq \Delta^{1/q}$.

Next, find the center $z$ of the minimum weight spanning star in $T$, so $\st^q(T)=\sum_{u\in T} d^q(z,u)$. 

We fix a constant $\delta$ with $1/\delta=\theta(2^q/\epsilon)$ and we perform the following cell decomposition of $T$. 
For a point $u\in T$ not yet assigned to a cell, define a new cell with $u$ as its center. Add to it any other point $v\in T$ not yet assigned to any other cell
and such that $d(v,u)\leq \delta\cdot \max\{\Delta'^{1/q}, \frac{1}{2} d(v,z)\}$. 
Repeat this operation until all points have an assigned cell and let $\pi:T\rightarrow T$ 
be the function that maps each point to its cell center.

\begin{lemma}
The above-defined cell decomposition produces $O((2^q / \epsilon)^D \log k)$ cells over $T$.
\end{lemma}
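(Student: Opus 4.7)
The plan is to bound the cell count by grouping points of $T$ according to their distance from the star-center $z$, and using the fact that the cell radius scales linearly with that distance to invoke the doubling property at each scale.

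First, I would bound the "extent" $R := \max_{u\in T} d(z,u)$ of $T$ around $z$ in terms of $\Delta^{1/q}$. By the choice of $z$ as the center of the minimum-weight spanning star, $R^q \leq \st^q(T)$. Chaining the two inequalities of Lemma~\ref{lem:objectiverelations} one gets
\[
\st^q(T) \leq \tfrac{2}{k}\cl^q(T) \leq \tfrac{2(2^q+1)}{k}\bp^q(T) = \tfrac{(2^q+1)k}{2}\,\Delta,
\]
so $R \leq O(k^{1/q})\cdot \Delta^{1/q}$ for constant $q$. Since $\Delta'^{1/q} \geq \tfrac12 \Delta^{1/q}$, this means that if we define the inner ball $T_0 := T\cap B(z, 2\Delta'^{1/q})$ and the dyadic shells $T_i := T\cap (B(z,2^{i+1}\Delta'^{1/q})\setminus B(z,2^{i}\Delta'^{1/q}))$ for $i\geq 1$, then only the first $O((\log k)/q) = O(\log k)$ shells can contain any point of $T$.

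Next, in each piece I would bound the number of cell centers. The key observation is that if $u_1$ was selected as a cell center before $u_2$, then $u_2$ was not absorbed into the cell of $u_1$, so
\[
d(u_1,u_2) > \delta\cdot\max\bigl\{\Delta'^{1/q},\,\tfrac12 d(u_2,z)\bigr\}.
\]
For $u_2\in T_0$ the maximum equals $\Delta'^{1/q}$, so cell centers lying in $T_0$ are pairwise at distance exceeding $\delta\Delta'^{1/q}$; since $T_0\subseteq B(z,2\Delta'^{1/q})$, the doubling property bounds their number by $(8/\delta)^D$. For $u_2\in T_i$ with $i\geq 1$, the maximum equals $\tfrac12 d(u_2,z)\geq 2^{i-1}\Delta'^{1/q}$, so cell centers inside $T_i$ are pairwise at distance exceeding $\delta\cdot 2^{i-1}\Delta'^{1/q}$; since $T_i\subseteq B(z,2^{i+1}\Delta'^{1/q})$, the ratio of radii is $O(1/\delta)$ and the doubling property yields at most $(32/\delta)^D$ cell centers per shell.

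Summing over $T_0$ and the $O(\log k)$ non-empty shells and substituting $1/\delta = \Theta(2^q/\epsilon)$ produces the desired $O((2^q/\epsilon)^D\log k)$ bound. The one delicate point is that the absorption rule is asymmetric (the allowed radius depends on the newcomer $v$ and not on the center $u$), so the separation statement must be phrased in terms of the later-chosen center; once this is done, the counting inside each shell is a routine application of the packing consequence of doubling dimension already used in Section~\ref{s:Prelim}. The other subtle step is chaining the two inequalities of Lemma~\ref{lem:objectiverelations} correctly so that the bound on $R$ gives only $O(\log k)$ non-empty shells rather than something with an extra $k$ factor.
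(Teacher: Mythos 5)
Your proof is correct and follows essentially the same route as the paper: decompose $T$ into a central ball around $z$ and dyadic shells, apply the packing consequence of the doubling property in each piece, and observe that only $O(\log k)$ shells are nonempty. Your justification of the last point (via $R^q\le\st^q(T)\le O(k)\cdot\Delta$, chaining the two inequalities of Lemma~\ref{lem:objectiverelations}) is in fact more explicit than the paper's one-line remark that $\Delta^{1/q}$ is within a factor $k^2$ of the diameter, and your handling of the asymmetric absorption rule is a correct detail the paper glosses over.
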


\begin{proof}
Recall that $1/\delta=\theta(2^q/\epsilon)$. We partition $T$ into the central ball $B(z,\Delta^{1/q})$ 
and the ``spherical shells'' $S_i:=B(z,2^i \cdot \Delta^{1/q}) \setminus B(z,2^{i-1} \cdot \Delta^{1/q})$, for $i\geq 1$. 
In the central ball, the distance between any two cell centers is strictly larger than $\delta \cdot \Delta'^{1/q}$, 
hence the number of cell centers in it is $O(\frac{\Delta^{1/q}}{\delta \cdot \Delta'^{1/q}})^D = O( 2^q / \epsilon )^D$. 
Similarly, for each $i\geq 1$, $S_i$ is contained in a ball of radius $2^i \cdot \Delta^{1/q}$ 
and the distance between cell centers in $S_i$ is strictly larger than  $2^{i-2} \delta \cdot \Delta^{1/q}$. 
Hence, the number of cell centers in $S_i$ is 
$O(\frac{2^i \cdot \Delta^{1/q}}{2^{i-2} \delta \cdot \Delta^{1/q}})^D = O( 2^q / \epsilon )^D$. 

Finally, we remark that $\Delta^{1/q}$ is within a factor of $k^2$ from the diameter of the instance, 
hence only the first $O(\log k)$ spherical shells are non-empty. This completes the proof. 
\end{proof}

What is the error incurred due to rounding? 
Notice that $d(u,\pi(u))\leq \max\{\Delta^{1/q}, \frac{1}{2} d(z,u)\}$ for each point $u\in T$.
The following are variations of Lemmas~\ref{lem:mapping} and \ref{lem:roundedset}.

\begin{lemma}\label{lem:roundingBP}
 Let $\pi:T\rightarrow T$ be such that $d(u,\pi(u))\leq \delta \cdot \max\{\Delta^{1/q}, \frac{1}{2} d(z,u)\}$ for each $u\in T$. 
 Then, for any points $u,v\in T$,
 \begin{align*}
  & |d^q(u,v) - d^q(\pi(u), \pi(v))| \leq 2^{q+1} \delta \cdot \Big[\Delta + d^q(u,v) + \frac{1}{2^q} d^q(z,u) + \frac{1}{2^q} d^q(z,v) \Big], 
 \end{align*}
and consequently, for any bipartition $T=L\cup R$, 
\begin{align}
 |f(L,R) - f(\hat{\pi}(L), \hat{\pi}(R))| \leq 2^{q+3}\delta \cdot f(L,R)=O(\epsilon) \cdot f(L,R). \label{eq:roundedset3}
\end{align}
\end{lemma}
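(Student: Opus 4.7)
The plan is to follow the structure of the proofs of Lemmas~\ref{lem:mapping} and~\ref{lem:roundedset}, with the added complication that the per-point rounding error is not a uniform $\delta\Delta^{1/q}$ but rather $\delta\cdot\max\{\Delta^{1/q},\tfrac{1}{2}d(z,u)\}$. The first inequality is a pointwise statement, obtained by a case split on whether $d(u,v)\le d(\pi(u),\pi(v))$ or not, exactly as in the proof of Lemma~\ref{lem:mapping}. In the first case, I would write $d(\pi(u),\pi(v))\le d(u,v)+d(u,\pi(u))+d(v,\pi(v))$ and bound the last two summands by $2\delta\cdot y$, where $y:=\max\{\Delta^{1/q},\tfrac{1}{2}d(z,u),\tfrac{1}{2}d(z,v)\}$. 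Applying \eqref{eq:triangle3} with $\epsilon'=2\delta$ (valid since $1/\delta=\Theta(2^q/\epsilon)$ guarantees $2\delta\le 1$) yields
\[
d^q(\pi(u),\pi(v))\le d^q(u,v)+2^{q+1}\delta\cdot\max\{d^q(u,v),y^q\},
\]
and since $y^q\le \Delta+\tfrac{1}{2^q}d^q(z,u)+\tfrac{1}{2^q}d^q(z,v)$, the desired bound on $|d^q(u,v)-d^q(\pi(u),\pi(v))|$ follows. The symmetric case is identical.

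For \eqref{eq:roundedset3}, I would sum the pointwise bound over all pairs $(\ell,r)\in L\times R$, which gives
\[
|f(L,R)-f(\hat\pi(L),\hat\pi(R))|\le 2^{q+1}\delta\Big[\tfrac{k^2}{4}\Delta+f(L,R)+\tfrac{k}{2^{q+1}}\st^q(T)\Big],
\]
using that $\sum_{\ell\in L}d^q(z,\ell)+\sum_{r\in R}d^q(z,r)=\st^q(T)$ and that each of these sums appears weighted by $|R|=|L|=k/2$. The first term equals $\bp^q(T)$ by definition of $\Delta$, and therefore is at most $f(L,R)$. For the third term, I would chain the inequalities of Lemma~\ref{lem:objectiverelations}: $\st^q(T)\le \tfrac{2}{k}\cl^q(T)$ and $\cl^q(T)\le (2^q+1)\bp^q(T)\le(2^q+1)f(L,R)$, which gives $\tfrac{k}{2^{q+1}}\st^q(T)\le \tfrac{2^q+1}{2^q}f(L,R)\le 2f(L,R)$ for $q\ge 1$. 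Plugging these three bounds in yields the factor $2^{q+1}\delta\cdot 4=2^{q+3}\delta$.

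The main obstacle is bookkeeping: making sure the $\tfrac{1}{2^q}$ factor surviving from $y^q$ combines correctly with the $\st^q(T)$ term so that the bound $2f(L,R)$ comes out cleanly from Lemma~\ref{lem:objectiverelations}. This is where having introduced the factor $\tfrac{1}{2}$ in the rounding radius (so that the contribution to $y^q$ is $\tfrac{1}{2^q}d^q(z,u)$ rather than $d^q(z,u)$) pays off, since it is precisely this factor that cancels the $2^q$ loss one would otherwise incur when converting $\st^q(T)$ into $\cl^q(T)$ via $\eqref{eq:relation1}$. Once the case split and the application of \eqref{eq:triangle3} in the first part are written down carefully, the second part is an essentially mechanical summation followed by one invocation each of \eqref{eq:relation1} and \eqref{eq:relation2}.
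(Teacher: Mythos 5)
Your proposal is correct and follows essentially the same route as the paper's proof: the same case split combined with inequality~\eqref{eq:triangle3} applied to $x = d(u,v)$ and $2\delta\cdot\max\{\Delta^{1/q},\tfrac12 d(z,u),\tfrac12 d(z,v)\}$ for the pointwise bound, and the same summation over $L\times R$ followed by the chain $\tfrac{k}{2}\st^q(T)\le\cl^q(T)\le(2^q+1)\bp^q(T)\le(2^q+1)f(L,R)$ to obtain the factor $2^{q+3}\delta$. Your observation about why the factor $\tfrac12$ in the rounding radius is exactly what makes the constants work out is a correct reading of the construction.
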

\begin{proof}
 Consider first the case that $d(u,v)\leq d(\pi(u), \pi(v))$. Then,
 \begin{align*}
  d^q(\pi(u),\pi(v)) &\leq [d(\pi(u), u) + d(u,v) + d(v, \pi(v))]^q  \\
		    & \leq [d(u,v)+2\delta\cdot \max\{\Delta^{1/q}, \frac{1}{2} d(z,u), \frac{1}{2} d(z,v)\}]^q & \text{(by hyp.)} \\
		    & \leq d^q(u,v) + 2^{q+1}\delta\cdot \max\{ \Delta, d^q(u,v), \frac{1}{2^q} d^q(z,u), \frac{1}{2^q} d^q(z,v) \} & \text{(by~\eqref{eq:triangle3})} \\
		    & \leq d^q(u,v) + 2^{q+1}\delta \cdot \Big[\Delta + d^q(u,v) + \frac{1}{2^q} d^q(z,u) + \frac{1}{2^q} d^q(z,v) \Big],
 \end{align*}
which proves the first claim.  
Conversely, if $d(u,v)>d(\pi(u), \pi(v))$, a similar proof yields
\[
 d^q(u,v)\leq d^q(\pi(u),\pi(v)) + 2^{q+1}\delta \cdot [\Delta + d^q(\pi(u),\pi(v)) + \frac{1}{2^q} d^q(z,u) + \frac{1}{2^q} d^q(z,v)],
\]
which again gives the first claim after bounding the term $d^q(\pi(u),\pi(v))$ inside the brackets by $d^q(u,v)$.
Consider now a bipartition $T=L\cup R$. By the first claim and the definitions of $\Delta$ and $f(L,R)$, we get
\begin{align*}
 |f(L,R) - f(\hat{\pi}(L), \hat{\pi}(R))| &\leq \sum_{\ell \in L, r\in R} |d^q(\ell, r) - d^q(\pi(\ell), \pi(r))| \\
			       \leq & 2^{q+1}\delta \cdot \Big[ \bp^q(T) + f(L,R) + \frac{1}{2^q}\sum_{\ell\in L, r\in R} \big( d^q(z,l) + d^q(z,r) \big)  \Big].
\end{align*}
By our choice of point $z$ and by inequalities \eqref{eq:relation1} and \eqref{eq:relation2}, the last term is bounded by  
\begin{align*}
 \frac{1}{2^q}\sum_{\ell\in L, r\in R} \big( d^q(z,\ell) + d^q(z,r)\big) & = \frac{1}{2^q} \cdot \frac{k}{2} \sum_{u\in T} d^q(z,u) = \frac{1}{2^q} \cdot \frac{k}{2} \cdot \st^q(T) \\
									 & \leq \frac{1}{2^q} \cdot \cl^q(T) \leq \frac{2^q+1}{2^q} \cdot \bp^q(T) < 2 \cdot \bp^q(T).
\end{align*}
This bound, together with the bound $\bp^q(T)\leq f(L,R)$, yields the second claim.
\end{proof}

\subsection*{Restricted search over the rounded instance}

The second step of the algorithm aims to find a bipartition $T=L\cup R$ so that 
\begin{align}
 f(\hat{\pi}(L), \hat{\pi}(R))\leq (1+O(\epsilon)) \cdot \bp^q(\hat{\pi}(T)), \label{eq:approxOpt2}
\end{align}
by performing a restricted search over the space of bipartitions of the rounded instance $\hat{\pi}(T)$.
This inequality, together with~\eqref{eq:roundedset3} implies that $f(L,R)\leq (1+O(\epsilon)) \cdot \bp^q(T)$, as desired.

As we did before, we represent a multiset over $\pi(T)$ by its vector of multiplicities $m\in\Z_{\geq 0}^{\pi(T)}$. 
Let $M$ be the vector of multiplicities of the full multiset $\hat{\pi}(T)$, i.e. $M_u = |\pi^{-1}(u)|$ for each $u\in \pi(T)$. 
For any vector $m$ with $0\leq m\leq M$, where the inequalities hold component-wise, we define its complement as $\bar{m} := M - m$. 
A bipartition is then given by a vector $m$ with $0\leq m\leq M$ and of cardinality $\|m\|_1 = k/2$. Its objective value is
\[ 
  f(m,\bar{m}) = \sum_{u,v\in \pi(T)} m_u \bar{m}_v d^q(u,v). 
\]

We define a restricted collection of bipartitions over which we search as follows: 
We fix a constant $\delta'>0$ with $1/\delta'=\theta(2^q/\epsilon)$. 
Then, for each point $u\in \pi(T)$ we generate a list of candidate multiplicities inside the range $[0, M_u]$ at regular intervals of size 
$\max\big\{\big \lfloor \delta' \cdot M_u \big\rfloor, 1\big\}$. 
For each vector $m$ with entries coming from these lists and of cardinality $\|m\|_1\leq k/2$, 
we raise its entries in an arbitrary way until $\|m\|_1=k/2$, 
but without increasing any entry $m_u$ by more than $\delta' \cdot M_u$ nor exceeding the threshold $M_u$. Consider this final vector $m$. If this last operation of raising entries is not possible, we discard vector $m$.
Among the vectors thus generated, we select the vector $m$ with smallest objective value and output a corresponding pre-image, 
i.e.~a bipartition $T=L\cup R$ where the vector of multiplicities of $\hat{\pi}(L)$ is precisely $m$.

Notice that the list of candidate multiplicities for each $u\in \pi(T)$ is of size $O(1/\delta')=O(2^q/\epsilon)$, 
hence the total number of bipartitions we generate is $(2^q/\epsilon)^{O((2^q / \epsilon)^D \log k )} = k^{O((2^q / \epsilon)^D \log (2^q / \epsilon))}$ which is polynomially bounded.
Moreover, if $m^*$ represents the optimal bipartition in the rounded instance, i.e. $f(m^*, \bar{m}^*)=\bp^q(\hat{\pi}(T))$, 
then we are guaranteed to generate a vector $m'$ such that 
\[
 m^*_u - \delta'\cdot M_u \leq m' \leq m^*_u + \delta'\cdot M_u \quad \text{ and } \quad \bar{m}^*_u - \delta'\cdot M_u \leq \bar{m}'_u \leq \bar{m}^*_u + \delta'\cdot M_u
\]
for each point $u\in \pi(T)$. The objective value of this vector is then 
\begin{align*}
 f(m',\bar{m}') &= \sum_{u,v\in \pi(T)} m'_u \bar{m}'_v d^q(u,v) \leq \sum_{u,v\in \pi(T)} (m^*_u + \delta' \cdot M_u) \bar{m}'_v d^q(u,v) \\
	      & \leq \sum_{u,v\in \pi(T)} m^*_u \bar{m}'_v d^q(u,v) + \delta'\sum_{u,v\in \pi(T)} M_u M_v d^q(u,v) \\
	      & \leq \sum_{u,v\in \pi(T)} m^*_u (\bar{m}^*_v + \delta'\cdot M_v) d^q(u,v) + \delta'\sum_{u,v\in \pi(T)} M_u M_v d^q(u,v) \\
	      & \leq \sum_{u,v\in \pi(T)} m^*_u \bar{m}^*_v d^q(u,v) + 2\delta'\sum_{u,v\in \pi(T)} M_u M_v d^q(u,v) \\
	      & = \bp^q(\hat{\pi}(T)) + 4\delta' \cdot \cl^q(\hat{\pi}(T)) \\
	      & \leq \bp^q(\hat{\pi}(T)) + 4(2^q+1)\delta' \cdot \bp^q(\hat{\pi}(T)) = (1+O(\epsilon))\cdot \bp^q(\hat{\pi}(T)),
\end{align*}
where the last inequality is an application of~\eqref{eq:relation2} over the rounded instance $\hat{\pi}(T)$. 
This shows that our output bipartition observes~\eqref{eq:approxOpt2} as claimed which completes the proof of Theorem~\ref{thm:BP}.

\section{Hardness Results}\label{s:NP}

In this section we present two hardness results for our diversity maximization problems. 

We first show that none of the three problems nor their generalizations for $q\geq 1$ admits a PTAS over general metrics, 
under the assumption that the planted-clique problem is hard~\cite{alon2011inapproximability}. 
Thus, the assumption of fixed doubling dimension is necessary to achieve our strong results.  
This is a generalization of a similar hardness result for standard remote-clique in~\cite{borodin2012max}.

\begin{theorem}\label{thm:planted}
Under the assumption that the planted clique problem is hard and for any fixed $q\in \R_{\geq 1}$, 
neither remote-clique, remote-star nor remote-bipartition over general metrics 
admits an approximation algorithm with a constant factor higher than $2^{-q}$.
\end{theorem}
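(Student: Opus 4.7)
My plan is to reduce from the planted $k$-clique problem in $G(n,p)$ for a suitably small constant $p>0$, generalising the $q=1$ construction of Borodin et al.~\cite{borodin2012max}. Given an input graph $G=(V,E)$ on $n$ vertices, I would define the $\{1,2\}$-valued metric $d$ on $V$ by $d(u,v)=2$ for every edge $\{u,v\}\in E$ and $d(u,v)=1$ for every non-edge; the triangle inequality is immediate because all distances lie in $\{1,2\}$. For any $k$-set $T\subseteq V$, the remote-clique value is exactly $\cl^q(T)=(2^q-1)\,e(T)+\binom{k}{2}$, where $e(T)$ is the edge count of $G[T]$, and the other two diversity values are controlled by $e(T)$ as well:
\begin{align*}
\st^q(T) &\leq (2^q-1)\cdot\frac{2e(T)}{k}+(k-1),\\
\bp^q(T) &\leq (2^q-1)\cdot\frac{k}{2(k-1)}\,e(T)+(k/2)^2,
\end{align*}
using the averaging bounds $\min_{z\in T}\deg_T(z)\leq 2e(T)/k$ and (via a uniformly random balanced bipartition of $T$) $\mathrm{mb}(T)\leq e(T)\cdot k/(2(k-1))$, where $\mathrm{mb}(T)$ denotes the minimum cross-edge count over balanced bipartitions of $T$.

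In a YES instance the planted $k$-clique simultaneously realises all three functions at $2^q$ times their ``uniform-$1$'' baselines $\binom{k}{2}$, $(k-1)$, $(k/2)^2$. To control the NO case I would fix a small $\delta>0$, apply the standard Hoeffding tail bound to the binomial $e(T)\sim\mathrm{Bin}(\binom{k}{2},p)$ for each fixed $T$, and union-bound over the $\binom{n}{k}$ candidate $k$-sets:
\[
\Pr\!\big[\exists\,T:\ e(T)\geq(p+\delta)\binom{k}{2}\big]\ \leq\ \binom{n}{k}\,e^{-2\delta^{2}\binom{k}{2}}.
\]
In the conjectured hard regime $k=n^{1/2-o(1)}$ of~\cite{alon2011inapproximability} this tends to zero for every constant $\delta>0$, so with high probability $e(T)\leq(p+\delta)\binom{k}{2}$ uniformly in $T$. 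Substituting into the three formulas above yields $\mathrm{OPT}_{\mathrm{NO}}\leq\big[(p+\delta)(2^q-1)+1\big]\cdot\text{baseline}$ for each of the three problems.

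Dividing the YES lower bound by the NO upper bound, the optimum ratio is at least $2^q/\big[(p+\delta)(2^q-1)+1\big]$, which tends to $2^q$ as $p,\delta\to 0$. Given any target $\epsilon>0$, I would pick $p$ and $\delta$ small enough to push the reciprocal below $2^{-q}+\epsilon$, so that any approximation algorithm with factor strictly exceeding $2^{-q}+\epsilon$ would separate planted from non-planted instances and refute the planted-clique conjecture. I expect the main difficulty to be keeping the single Chernoff-plus-union-bound on $e(T)$ tight enough in the hardness regime of $k$ (which forces $k$ to be polynomial in $n$, precisely the parameter range of~\cite{alon2011inapproximability}); a convenient byproduct of routing everything through $e(T)$ is that the remote-star and remote-bipartition estimates follow for free from classical averaging, so no separate concentration analysis is needed for those two functions.
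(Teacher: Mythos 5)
Your metric construction and the way you route all three objectives through the edge count $e(T)$ are exactly the paper's: the $(1,2)$-metric, the clique realizing average value $2^q$ in the YES case, and the averaging bounds for $\st^q$ and $\bp^q$ (which the paper obtains equivalently from inequalities \eqref{eq:relation1} and \eqref{eq:relation2}). The calculations in the NO case are also correct as far as they go: if every $k$-set satisfies $e(T)\le \eta\binom{k}{2}$, then all three optima are at most $(1+(2^q-1)\eta)$ times their baselines, and the ratio tends to $2^q$ as $\eta\to 0$.

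The gap is in where the NO-case sparsity comes from. You obtain $\eta=p+\delta$ by a Chernoff-plus-union bound over $G(n,p)$ with $p$ an \emph{arbitrarily small} constant, which means your argument rests on the hardness of planted clique in $G(n,p)$ for every constant $p>0$. That is not the assumption in the theorem statement: the standard planted-clique assumption concerns $G(n,1/2)$, and there is no routine reduction from $p=1/2$ to small $p$ (subsampling edges destroys the planted clique along with the background). If you insist on $p=1/2$, your own bound gives $\eta\approx 1/2$ and hence only rules out factors above $\tfrac{2^{q+1}}{2^q+1}\approx 2\cdot 2^{-q}\cdot 2^{q-1}$ --- concretely, an inapproximability threshold of about $1/2$ rather than $2^{-q}$, which for large $q$ is far from the claimed bound. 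The paper avoids this entirely by invoking the distinguishing theorem of Alon et al.~\cite{alon2011inapproximability} as a black box: assuming standard planted-clique hardness, it is already hard to distinguish a graph containing a $k$-clique ($k=n^{1-\epsilon}$) from one in which \emph{every} $k$-subset spans at most $\delta\binom{k}{2}$ edges, for any constant $\delta>0$. Deriving that ``clique versus uniformly $\delta$-sparse'' gap from the $p=1/2$ assumption is a genuine amplification (via graph products) and is precisely the step your proposal replaces with an unproved, different hardness assumption. To repair the proof, drop the $G(n,p)$ sampling and the concentration argument and instead apply your three objective bounds directly to the two deterministic promise cases of the Alon et al. theorem, with $\delta$ chosen as a function of the target approximation gap.
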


\begin{proof}
Based on the assumption that the planted clique problem is hard, it is proven by Alon et al.~\cite{alon2011inapproximability} that for any constants $0 < \epsilon \leq \frac{2}{3}$ and $\delta>0$, it is hard to distinguish between 
a) an $n$-vertex graph containing a clique of size $k = n^{1-\epsilon}$, and 
b) an $n$-vertex graph where each $k$-vertex subset induces at most $\delta \cdot \binom{k}{2}$ edges. 

Fix a constant $q\geq 1$ and one of the three problems, and assume by contradiction that there is a constant $\lambda>0$ 
and an approximation algorithm for this problem with a factor of $2^{-q}+\lambda$. 
We show how we can use this algorithm to distinguish between the cases a) and b) mentioned above, 
where we set $\delta := \lambda$ (and $\epsilon$ can be set to any value in its prescribed range). 
For a given $n$-vertex graph $G=(V,E)$, we define its associated $(1,2)$-metric space $(V,d)$ 
by setting $d(u,v)=2$ if $u$ and $v$ are adjacent, and 1 otherwise. 

In case a), the space $(V,d)$ contains a $k$-set $T$ with all pair-wise distances equal to 2. 
Therefore, $\Delta_{\cl^q} = \Delta_{\st^q} = \Delta_{\bp^q} = 2^q$. Our assumed approximation algorithm is guaranteed to find a solution with an average value of at least 
$(2^{-q} + \lambda)\cdot 2^q = 1+ 2^q\cdot \delta$. 

In case b), for any $k$-set $T$ in the space $(V,d)$ we have that 
$\cl^q(T)\leq (1\cdot (1-\delta) + 2^q \cdot \delta ) \cdot \binom{k}{2} < (1+ 2^q\cdot \delta) \cdot \binom{k}{2}$. 
By inequalities \eqref{eq:relation1} and \eqref{eq:relation2}, this implies that 
$\Delta_{\st^q} \leq \Delta_{\cl^q} < 1+ 2^q\cdot \delta$ and 
$\Delta_{\bp^q} < \Delta_{\cl^q} < 1+2^q\cdot\delta$.
In particular, in this case our assumed algorithm is bound to return a solution with an average value strictly smaller 
than $1+2^q\cdot\delta$. Thus we can distinguish this case from the one above. This completes the proof.
\end{proof}

Our next result corresponds to the first proof of NP-hardness for any of the three diversity problems in a fixed-dimensional setting. In fact, the only other diversity maximization problem known to be NP-hard in such a setting is remote-edge~\cite{wang1988study}. 
In particular, we prove NP-hardness for the squared distances ($q=2$) version of remote-clique 
in the case where all input points are \emph{unit vectors} in the Euclidean space $\R^3$, i.e. $X\subseteq \s^2$.

\begin{theorem}\label{thm:NP}
 The squared distances version ($q=2$) of the remote-clique problem is NP-hard over the three-dimensional Euclidean space.
\end{theorem}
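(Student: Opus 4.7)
\emph{Step 1: Reduction to a min-norm subset problem.} For any $k$-subset $T\subseteq \s^2$, expanding $\|u-v\|^2 = 2 - 2\langle u,v\rangle$ and using $\sum_{\{u,v\}\subseteq T}\langle u,v\rangle = \tfrac{1}{2}\bigl(\|\sum_{v\in T}v\|^2 - k\bigr)$ yields
\[
\cl^2(T) \;=\; 2\binom{k}{2} - \Bigl(\Bigl\|\sum_{v\in T}v\Bigr\|^2 - k\Bigr) \;=\; k^2 - \Bigl\|\sum_{v\in T}v\Bigr\|^2.
\]
Hence the squared-distance remote-clique problem on $\s^2$ is equivalent to the following: given $X\subseteq \s^2$ and $k$, find a $k$-subset of $X$ whose vector sum has minimum Euclidean norm. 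I would prove NP-hardness of this min-norm subset problem; since $\s^2\subset\R^3$, NP-hardness for the squared remote-clique in $\R^3$ follows immediately.

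\emph{Step 2: Reduction from Partition.} I would reduce from the (weakly) NP-hard Partition problem: given positive integers $b_1,\ldots,b_{2m}$ summing to $2S$, decide whether there exist signs $s_1,\ldots,s_{2m}\in\{\pm 1\}$ with $\sum_i s_i b_i=0$. Fix a parameter $M$ polynomial in $m\cdot\max_i b_i$, and set $k:=2m$. For each $i$, construct two ``mirror'' unit vectors in $\s^2$,
\[
v_i^\pm \;:=\; \frac{1}{\sqrt{1+b_i^2/M^2}}\bigl(\pm b_i/M,\;\cos\theta_i,\;\sin\theta_i\bigr),
\]
for pair-specific angles $\theta_i\in[0,2\pi)$ chosen in general position. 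The intended decoding picks exactly one vector from each pair, encoding a sign $s_i\in\{\pm 1\}$; under any such ``one-per-pair'' selection the $(y,z)$-components of $\sum v_i^\pm$ depend only on the $\theta_i$'s (and hence are constant across all one-per-pair selections), whereas the $x$-component equals $\frac{1}{M}\sum_i s_i b_i + O(1/M^3)$. Minimizing the sum norm over one-per-pair selections therefore reduces exactly to minimizing $|\sum_i s_i b_i|$, which equals zero iff the Partition instance is a YES-instance.

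\emph{Step 3: Main obstacle.} The central technical step is to prove that \emph{every} optimal $k$-subset respects the one-per-pair structure. A deviating solution that double-covers some pair $i$ while skipping some pair $j$ replaces the $(y,z)$-contribution $(\cos\theta_j,\sin\theta_j)/\sqrt{\cdot}$ by $(\cos\theta_i,\sin\theta_i)/\sqrt{\cdot}$, producing an imbalance in the $y,z$-aggregate while killing the $x$-contribution of both pairs. By choosing the $\theta_i$ sufficiently generically (for instance along an irrational rotation on the circle, approximated to polynomial precision), every non-empty pattern of double-covers and skips produces a $y,z$-perturbation of magnitude $\Omega(1)$, inflating the squared norm by $\Omega(1)$ and dominating any $O(1/M)$ gain conceivable on the $x$-axis. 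Once rigidity is established, the integer gap $|\sum_i s_i b_i|\geq 1$ in NO-instances translates to an $\Omega(1/M^2)$ gap in the squared-norm objective, and hence in $\cl^2$; this is polynomially detectable, completing the reduction. Establishing this ``one-per-pair rigidity'' rigorously — essentially a discrepancy-style argument on the chosen angular configuration — is the main technical hurdle; the rest of the argument (algebraic identity, choice of $M$, gap translation) is routine once rigidity is in place.
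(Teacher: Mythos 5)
Your Step 1 is exactly the paper's starting point (the paper states it as $\cl^2(T)=k^2(1-\|z_T\|^2)$ for the centroid $z_T$, which is the same identity), and the overall strategy --- encode a subset-sum-type problem in the coordinates of near-unit vectors and minimize the norm of the vector sum --- is also the paper's. But your Step 3 has a genuine gap, and the specific mechanism you propose for closing it cannot work. First, the claim that generically chosen angles $\theta_1,\dots,\theta_{2m}$ make \emph{every} nonzero zero-sum $\{-1,0,1\}$-combination of the planar vectors $(\cos\theta_i,\sin\theta_i)$ have norm $\Omega(1)$ is false for large $m$: the $\binom{2m}{m}$ sums of $m$-element subsets all lie in a disk of radius $O(m)$, so by pigeonhole two of them are within distance $2^{-\Omega(m)}$ of each other, and their difference is precisely a nonzero, zero-coefficient-sum $\{-1,0,1\}$-combination of exponentially small norm --- no choice of angles avoids this. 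Second, even where the perturbation $p$ of the $(y,z)$-aggregate is $\Omega(1)$, the objective is $\|V+p\|^2+x^2$ where $V=\sum_i w_i$ is the baseline $(y,z)$-sum, which in your construction is a fixed vector of magnitude up to $\Theta(m)$; the cross term $2\langle V,p\rangle$ can be negative and of order $m\|p\|$, so a deviating selection can strictly \emph{decrease} the squared norm rather than inflate it by $\Omega(1)$. Since a deviating pattern can simultaneously zero out the $x$-contributions of the double-covered and skipped pairs, a NO-instance of Partition could still admit a small-norm $k$-subset, breaking the reduction.

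The paper's gadget avoids the need for any pairing rigidity. It reduces from $K$-SUM (which carries its own cardinality constraint) and uses the three coordinates for three independent jobs: all "left" points sit near $(-1,0,0)$ and all "right" points near $(+1,0,0)$, so the dominant $x$-components force any near-optimal $k$-set to take exactly $K$ points from each cluster; the left selection's value is encoded in the $y$-coordinate and the right selection's value in the $z$-coordinate, so each side must separately be a zero-sum $K$-set, with a clean $\frac{1}{2tK^{3/2}}$ quantitative gap. Nothing forces the two sides to pick matching elements, and no discrepancy-style genericity claim is needed. If you want to salvage your route, you would have to redesign the gadget so that structural deviations are penalized by a coordinate whose contribution is provably dominant (as the paper's $x$-coordinate is), rather than by genericity of a planar angular configuration.
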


We remark that squared Euclidean distances over unit vectors correspond precisely to the popular cosine distances, 
hence the case considered is highly relevant.

For a $k$-set $T\subseteq \s^2$ with Euclidean distances, the function $\cl^2(T):=\sum_{\{u,v\} \in \binom{T}{2}} d^2(u,v)$ 
has very particular geometric properties related to the concept of \emph{centroid}. 
The centroid of a $k$-set $T$ is defined as $z_T:=\frac{1}{k}\sum_{u\in T} u$. 
It represents the coordinate-wise average of the points in $T$. 
The following result greatly simplifies the computation of function $\cl^2(T)$ in terms of the centroid. 
We state it for a general dimension $D$ even if we only use it for the case $D=3$. 
Its proof is deferred to Appendix~\ref{s:proofs}.

\begin{lemma} \label{lem:centroid}
For a $k$-set $T\subseteq \s^{D-1} \subseteq\mathbb{R}^D$ with centroid $z_T:=\frac{1}{k}\sum_{u\in T} u$,
\begin{align*}
\cl^2(T) =k^2 \cdot \big(1- \|z_T\|^2\big).
\end{align*}
\end{lemma}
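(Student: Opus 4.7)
\bigskip

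The plan is to expand the squared Euclidean distance in the standard way and exploit the fact that all points lie on the unit sphere so that their norms simplify to one. First I would write
\[
  \cl^2(T) \;=\; \frac{1}{2}\sum_{u,v\in T} \|u-v\|^2 \;=\; \frac{1}{2}\sum_{u,v\in T}\bigl(\|u\|^2 - 2\langle u,v\rangle + \|v\|^2\bigr),
\]
and then use $\|u\|=\|v\|=1$ (since $T\subseteq \s^{D-1}$) to collapse the norm terms to a sum of $2k^2$ ones, giving $\cl^2(T)=k^2-\sum_{u,v\in T}\langle u,v\rangle$.

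The second step is to recognize the remaining double sum as the squared norm of the sum vector: by bilinearity of the inner product,
\[
  \sum_{u,v\in T}\langle u,v\rangle \;=\; \Bigl\langle \sum_{u\in T} u,\; \sum_{v\in T} v \Bigr\rangle \;=\; \Bigl\|\sum_{u\in T} u\Bigr\|^2 \;=\; \|k\, z_T\|^2 \;=\; k^2\|z_T\|^2,
\]
by the definition of the centroid. Substituting back yields $\cl^2(T)=k^2-k^2\|z_T\|^2=k^2(1-\|z_T\|^2)$, as claimed.

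There is no real obstacle here: the argument is a one-line bilinearity computation combined with the unit-norm assumption on the points. The only thing to be mindful of is to keep the self-pairs $u=v$ included in the double sum (which is harmless since $d^2(u,u)=0$ contributes nothing, while on the inner-product side they contribute the $\|u\|^2$ terms that exactly cancel in the expansion), so that the sum can be factored cleanly as $\langle \sum_u u, \sum_v v\rangle$.
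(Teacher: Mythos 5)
Your proof is correct, and it takes a genuinely different (and more direct) route than the paper. You expand $\|u-v\|^2=\|u\|^2-2\langle u,v\rangle+\|v\|^2$ over all ordered pairs, use the unit-norm assumption to reduce the norm terms to $k^2$, and identify $\sum_{u,v\in T}\langle u,v\rangle=\bigl\|\sum_{u\in T}u\bigr\|^2=k^2\|z_T\|^2$ by bilinearity; your remark about keeping the diagonal pairs in the double sum is exactly the right point of care, and every step checks out. The paper instead introduces the auxiliary function $f_T(x):=\sum_{u\in T}d^2(x,u)$, proves the parallel-axis identity $f_T(x)=k\,d^2(x,z_T)+f_T(z_T)$, deduces $\cl^2(T)=\frac{1}{2}\sum_{u\in T}f_T(u)=k\,f_T(z_T)$, and only then invokes the sphere assumption through $f_T(0)=k$. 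Your argument is shorter and needs nothing beyond bilinearity of the inner product; the paper's detour buys the intermediate identities relating $\cl^2(T)$ to the ``moment of inertia'' $f_T(z_T)$ about the centroid, which hold for arbitrary point sets in $\R^D$ and make transparent that the sphere hypothesis is used only once, to evaluate $\sum_{u\in T}\|u\|^2$. Either proof is acceptable.
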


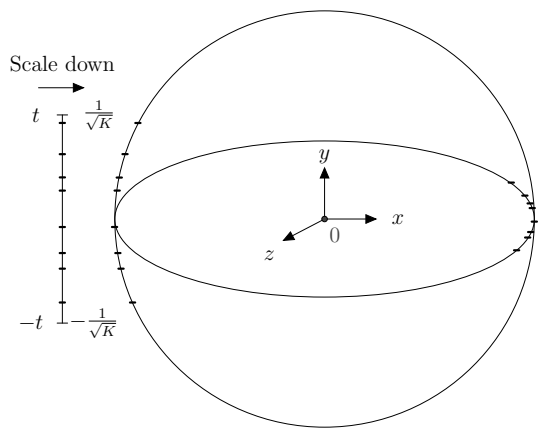
\begin{figure}[t]
\begin{center}
\scalebox{.69}{

\definecolor{uuuuuu}{rgb}{0.26666666666666666,0.26666666666666666,0.26666666666666666}
\begin{tikzpicture}[line cap=round,line join=round,>=triangle 45,x=1.0cm,y=1.0cm]
\clip(-6.46166865124929,-4.56696449800584) rectangle (4.584858441716284,4.459975924223601);
\draw(0.,0.) circle (4.cm);
\draw(0.,0.) ellipse (4.cm and 1.5cm);
\draw [->] (0.,0.) -- (1.,0.);
\draw [->] (0.,0.) -- (0.,1.);
\draw [->] (0.,0.) -- (-0.8,-0.42);
\draw (-5.,2.)-- (-5.,-2.);
\draw (-5.1,2.)-- (-4.9,2.);
\draw (-5.1,-2.)-- (-4.9,-2.);
\draw [line width=1.2pt] (-5.05,0.8)-- (-4.95,0.8);
\draw [line width=1.2pt] (-5.050810991685095,0.5460367734292512)-- (-4.9508109916850955,0.5460367734292512);
\draw [line width=1.2pt] (-5.050810991685095,-0.9539632265707487)-- (-4.9508109916850955,-0.9539632265707487);
\draw [line width=1.2pt] (-5.050810991685099,-1.6039632265707493)-- (-4.950810991685099,-1.6039632265707493);
\draw [line width=1.2pt] (-5.050810991685095,1.8460367734292518)-- (-4.9508109916850955,1.8460367734292518);
\draw [line width=1.2pt] (-5.050810991685097,-0.15396322657074896)-- (-4.950810991685097,-0.15396322657074896);
\draw [line width=1.2pt] (-5.050810991685095,-0.6539632265707488)-- (-4.9508109916850955,-0.6539632265707488);
\draw [line width=1.2pt] (-5.050810991685095,1.2460367734292512)-- (-4.9508109916850955,1.2460367734292512);
\draw [line width=1.2pt] (-3.6094445692663104,1.8451393713382171)-- (-3.5094445692663108,1.8451393713382171);
\draw [line width=1.2pt] (-3.855012190720009,1.2470700845582339)-- (-3.755012190720009,1.2470700845582339);
\draw [line width=1.2pt] (-3.965080145238988,0.7966690187494272)-- (-3.865080145238988,0.7966690187494272);
\draw [line width=1.2pt] (-4.005080145238989,0.5466690187494273)-- (-3.905080145238989,0.5466690187494273);
\draw [line width=1.2pt] (-4.05508014523899,-0.15333098125057268)-- (-3.9550801452389903,-0.15333098125057268);
\draw [line width=1.2pt] (-3.9913479665318663,-0.6542010673871617)-- (-3.891347966531866,-0.6542010673871617);
\draw [line width=1.2pt] (-3.9313479665318654,-0.9542010673871616)-- (-3.8313479665318653,-0.9542010673871616);
\draw [line width=1.2pt] (-3.7113479665318634,-1.604201067387162)-- (-3.6113479665318633,-1.604201067387162);
\draw [line width=1.2pt] (3.6117997767561913,-0.6007537646574633)-- (3.7117997767561923,-0.6007537646574633);
\draw [line width=1.2pt] (3.8228999299538233,-0.3556221917608715)-- (3.9228999299538234,-0.3556221917608715);
\draw [line width=1.2pt] (3.88,-0.25)-- (3.98,-0.25);
\draw [line width=1.2pt] (3.95,-0.05)-- (4.05,-0.05);
\draw [line width=1.2pt] (3.91,0.21)-- (4.01,0.21);
\draw [line width=1.2pt] (3.87,0.3)-- (3.97,0.3);
\draw [line width=1.2pt] (3.77,0.45)-- (3.87,0.45);
\draw [line width=1.2pt] (3.51,0.7)-- (3.61,0.7);
\draw [->,line width=0.4pt] (-5.4549694172149525,2.52233311158679) -- (-4.578458538348341,2.52233311158679);
\begin{scriptsize}
\draw [fill=uuuuuu] (0.,0.) circle (1.5pt);
\draw[color=uuuuuu] (0.2,-0.3) node {\large $0$};
\draw[color=black] (1.4,0) node {\large $x$};
\draw[color=black] (0,1.2) node {\large $y$};
\draw[color=black] (-1.05,-0.7) node {\large $z$};
\draw[color=black] (-5.5,2) node {\large $t$};
\draw[color=black] (-5.6,-2) node {\large $-t$};
\draw[color=black] (-4.4,-2) node {\large $-\frac{1}{\sqrt{K}}$};
\draw[color=black] (-4.3,2) node {\large $\frac{1}{ \sqrt{K}}$};
\draw[color=black] (-5,3) node {\large Scale down};
\end{scriptsize}
\end{tikzpicture}

}
\end{center}
\caption{Reduction from $K$-SUM to remote-clique with $q=2$, $|X|=2|M|$ and $k=2K$.}
\label{fig:NP}
\end{figure}

We present a reduction from the $K$-SUM problem which is known to be NP-hard: Given a set $M$ of integer numbers in the range $[-t, t]$ for some threshold $t$ and a positive integer $K$, determine whether there is a $K$-set $S\subseteq M$ that sums to zero.
Given such an instance of $K$-SUM, we define the following instance $X\subseteq \s^2$ of remote-clique  
with $q=2$, $|X|=2|M|$ and $k=2K$, see Figure \ref{fig:NP}. 
For each $m\in M$, set $m':=\frac{m}{t\sqrt{K}}$ and define
\[
 X := \Big\{\ell_m:=\big(-\sqrt{1 - m'^2}, m', 0\big)^\intercal : \ m\in M\Big\} \cup \Big\{r_m:=\big(\sqrt{1-m'^2}, 0, m'\big)^\intercal: \ m\in M\Big\}.
\]
Due to the scaling down by a factor of $\frac{1}{t\sqrt{K}}$, the $y$- and $z$-components of all points in $X$ are upper bounded by $\frac{1}{\sqrt{K}}$ in absolute value, 
while their $x$-components are lower bounded by $\sqrt{1-\frac{1}{K}}$ in absolute value. 
The points are thus tightly clustered around one of the two antipodal points $\pm(1,0,0)$ and $X$ is partitioned into a \emph{left cluster} and a \emph{right cluster}.

From Lemma~\ref{lem:centroid}, it is clear that solving this instance of remote-clique is equivalent 
to finding the $k$-set whose centroid is closest to the origin. 
Hence, the proof of Theorem~\ref{thm:NP} is complete once we show the following claim.

\begin{lemma}
 If $M$ has a $K$-set $S$ with zero sum, then $X$ has a $k$-set $T$ with centroid $z_T=0$. 
 Otherwise, for every $k$-set $T\subseteq X$ we have $\|z_T\|\geq \frac{1}{2tK^{3/2}}$.
\end{lemma}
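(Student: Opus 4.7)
The plan is to prove the two directions separately: the forward one by explicit construction, and the reverse one by case analysis on how $T$ splits between the left and right clusters.

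For the forward direction, given a zero-sum $K$-subset $S \subseteq M$, I would take $T := \{\ell_m : m \in S\} \cup \{r_m : m \in S\}$, which has cardinality $2K = k$. A direct componentwise computation gives $k z_T = 0$: the $x$-components cancel between $\ell_m$ and $r_m$ for each $m \in S$, while the $y$- and $z$-components both equal $\sum_{m \in S} m' = \frac{1}{t\sqrt{K}} \sum_{m \in S} m = 0$.

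For the reverse direction, assume no $K$-subset of $M$ sums to zero, fix a $k$-set $T \subseteq X$, and decompose it as $\{\ell_m : m \in L\} \cup \{r_m : m \in R\}$ with $|L| = a$, $|R| = b$, and $a + b = 2K$; note that $a - b$ has the same parity as $2K$ and is therefore even. I would split into two cases. In the balanced case $a = b = K$, the set $L$ is a $K$-subset of $M$, so by hypothesis $\sum_{m \in L} m$ is a nonzero integer and hence has absolute value at least $1$. The $y$-component then gives $|(k z_T)_y| = \frac{1}{t\sqrt{K}} \bigl|\sum_{m \in L} m\bigr| \geq \frac{1}{t\sqrt{K}}$, and dividing by $k = 2K$ yields exactly $\|z_T\| \geq \frac{1}{2tK^{3/2}}$. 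This is the step in which the zero-sum hypothesis is used, and it produces the tight constant. In the unbalanced case, WLOG $a > b$, so $a \geq K+1$ and $b \leq K-1$. I would use the $x$-component: from $|m'| \leq 1/\sqrt{K}$ we get $\sqrt{1-m'^2} \geq \alpha := \sqrt{1-1/K}$, hence $-(kz_T)_x \geq a\alpha - b = a(\alpha + 1) - 2K$, which is minimized at $a = K+1$ and equals $K(\alpha - 1) + \alpha + 1$. The algebraic identity $K(1-\alpha) = 1/(1+\alpha) \leq 1$ then shows $|(k z_T)_x| \geq \alpha$, so $\|z_T\| \geq \alpha/(2K) \geq 1/(2K\sqrt{2})$ for $K \geq 2$, which comfortably exceeds $\frac{1}{2tK^{3/2}}$ whenever $t \geq 1$ and $K \geq 2$ (the case $K = 1$ is trivial for $K$-SUM so can be excluded).

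The main technical subtlety lies in the unbalanced case: the naive bound $\sqrt{1-1/K} \geq 1 - 1/K$ gives only $a\alpha - b \geq 0$, which is useless, so one must instead exploit the sharper identity $1 - \sqrt{1-1/K} = 1/[K(1+\sqrt{1-1/K})]$ to obtain a uniform positive lower bound. Structurally, however, the balanced case carries all the number-theoretic content of the reduction, while the unbalanced case contributes only a much stronger bound of order $1/K$ that is absorbed into the target inequality.
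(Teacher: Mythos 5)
Your proof is correct and follows essentially the same route as the paper: the same explicit set $T$ for the forward direction, and for the reverse direction the same dichotomy between the balanced case (where the $y$-component together with integrality of $\sum_{m\in S}m$ gives exactly the constant $\frac{1}{2tK^{3/2}}$) and the unbalanced case (where the $x$-component dominates). Two small remarks on your closing commentary: the naive bound $\sqrt{1-1/K}\geq 1-1/K$ in fact yields $a\alpha-b\geq 1-1/K$ rather than merely $\geq 0$; on the other hand, your sharper estimate $\|z_T\|\geq \alpha/(2K)$ is genuinely useful to cover the edge case $K=2$, $t=1$ (where the required inequality holds with equality), and in this respect your treatment of the unbalanced case is more careful than the paper's.
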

\begin{proof}
 Suppose that $M$ has a $K$-set $S$ with zero sum and define the $k$-set $T:=\{\ell_m, r_m: \ m\in S \}\subseteq X$. 
Recall that its centroid $z_T$ corresponds to the component-wise average of the points in $T$, so we analyze these components separately. 
In $z$, all points of $T$ on the left cluster are zero and those on the right cluster have a zero sum, so $(z_T)_z=0$. 
In $y$, all points of $T$ on the right cluster are zero and those on the left cluster have a zero sum, so $(z_T)_y=0$. 
And in $x$, each point $\ell_m$ of $T$ on the left cluster is canceled out by its paired point $r_m$ on the right cluster, so $(z_T)_x=0$. Therefore, $z_T=0$.

Finally, we prove the contrapositive of the second statement, i.e. we assume that there is a $k$-set $T\subseteq X$ with $\|z_T\|<\frac{1}{2tK^{3/2}}$. 
The set $T$ must contain exactly $K$ points in the left cluster and $K$ points in the right cluster. 
Indeed, if $T$ had at most $K-1$ points in the left cluster, then the $x$-component of its centroid would give
\[
(z_T)_x \geq (K-1)(-1)+(K+1)\sqrt{1-\frac{1}{K}} \geq -(K-1)+(K+1)\left(1-\frac{1}{K}\right)=1-\frac{1}{K},
\] 
and hence $\|z_T\| \geq |(z_T)_x | \geq 1 - \frac{1}{K} > \frac{1}{2t K^{3/2}}$ for $K\geq 2$ and $t\geq 1$, leading to a contradiction. 

Let $T= L\cup R$ be the corresponding (balanced) bipartition of $T$ given by the left and right clusters. 
Each of $L$ and $R$ must correspond to a $K$-set of $M$ with zero sum. 
Otherwise, without loss of generality $L$ corresponds to a $K$-set $S$ of $M$ with sum at least 1, but then 
\[
(z_T)_y=\frac{1}{2K}\sum_{m\in S} m' = \frac{1}{2tK^{3/2}} \sum_{m\in S} m \geq \frac{1}{2tK^{3/2}}
\]
and thus $\|z_T\|\geq |(z_T)_y| \geq \frac{1}{2t K^{3/2}}$, again a contradiction. This completes the proof.
\end{proof}

\section{Acknowledgements} \label{s:Ack}
Friedrich Eisenbrand acknowledges support from the Swiss National Science Foundation grant 163071, ``Convexity, geometry of numbers, and the complexity of integer programming''. This work was done while Sarah Morell was affiliated to EPFL and was visiting the Simons Institute for the Theory of Computing.


\appendix
\section{Deferred proofs}\label{s:proofs}

\begin{proof}[Proof of Lemma~\ref{lem:relaxedtriangle}]
 We can assume without loss of generality that $d(u,v)\geq d(v,w)$. 
 If $t:=d(v,w)/d(u,v)$, then $t$ is a variable between 0 and 1. The claimed inequality~\eqref{eq:triangle1} then reduces to 
$(1+t)^q \leq 2^{q-1}[1+t^q]$ which in turn is equivalent to $f(t):=\frac{(1+t)^q}{1+t^q} \leq 2^{q-1}$. 
The inequality clearly holds for $t=1$. As $f(t)$ is monotone increasing in the interval $[0,1]$, it must hold in the full interval. 
The monotonicity of $f(t)$ is verified by checking that its first derivative is non-negative.

We now pass to the proof of inequality~\eqref{eq:triangle3}. We consider two cases. If $x\leq \epsilon y$, then 
\[
 (x+\epsilon y)^q \leq (2\epsilon y)^q \leq 2^q \epsilon \cdot y^q \leq x^q + 2^q\epsilon \cdot \max\{ x^q, y^q \}.
\]
If $x>\epsilon y$, we use the binomial series of the term $(1+\epsilon y/x)^q$ in order to obtain
\begin{align*}
 (x+\epsilon y)^q &= x^q (1+ \epsilon y/x)^q = x^q \sum_{j=0}^\infty \binom{q}{j} (\epsilon y /x)^j \\
		  &\leq x^q \Big[1+\epsilon \sum_{j=1}^\infty \binom{q}{j} (y/x)^j \Big] = x^q + \epsilon \sum_{j=1}^\infty \binom{q}{j} x^{q-j}y^j \\
		  & \leq x^q + \epsilon \cdot \max\{x^q,y^q\}\sum_{j=1}^\infty \binom{q}{j} < x^q + 2^q\epsilon \cdot \max\{x^q, y^q\}.
\end{align*}
This completes the proof of inequality~\eqref{eq:triangle3}.
\end{proof}

\begin{proof}[Proof of Lemma~\ref{lem:objectiverelations}]
By the minimality of the objective function for remote-star, 
we have $\st^q (T)\leq \sum_{v\in T} d^q(u,v)$ for each point $u\in T$. Therefore,
\[
 \cl^q(T)=\frac{1}{2}\sum_{u\in T} \sum_{v\in T} d^q(u,v) \geq \frac{1}{2} \sum_{u\in T} \st^q(T) = \frac{k}{2}\cdot \st^q(T),
\]
which proves the first inequality in~\eqref{eq:relation1}. 
For the second inequality, let $z$ be the center of the min-weight spanning star in $T$ so that $\st^q(T)=\sum_{v\in T} d^q(z,v)$. 
We apply~\eqref{eq:triangle1} to obtain
\[
 \cl^q(T) = \frac{1}{2}\sum_{u,v\in T} d^q(u,v) \leq \frac{1}{2} \sum_{u,v\in T} 2^{q-1}[d^q(z,u)+d^q(z,v)] = 2^{q-1} k \cdot \st^q(T). 
\]

Next, the first inequality in~\eqref{eq:relation2} follows from the observation that, 
if we generate a bipartition of set $T$ uniformly at random, 
the probability of any pair of points to appear on opposite sides of the bipartition is $(k^2/4) / \binom{k}{2} = \frac{k}{2(k-1)}$. 
By an averaging argument, there must be a bipartition whose weight is at most $\frac{k}{2(k-1)} \cdot \cl^q(T)$. 
It remains to prove the second inequality in~\eqref{eq:relation2}. 
Let $T=L\cup R$ be the min-weight bipartition of $T$, so $\bp^q(T)=\sum_{\ell\in L, r\in R} d^q(\ell,r)$. 
An application of inequality~\eqref{eq:triangle1} gives $d^q(\ell, \ell') \leq 2^{q-1}[d^q(\ell, r)+ d^q (\ell', r)]$. 
We sum up these inequalities over all $\ell, \ell'\in L$ and $r\in R$ to obtain
 \[
 |R| \cdot \sum_{\ell, \ell' \in L} d^q(\ell, \ell') \leq 2^{q-1} \cdot 2\cdot |L| \cdot \sum_{\ell\in L, r\in R} d^q(\ell, r).
 \]
Since $|L|=|R|$, the previous inequality reduces to $\cl^q(L) \leq 2^{q-1} \cdot \bp^q(T)$. 
Similarly, one can prove that $\cl^q(R) \leq 2^{q-1} \cdot \bp^q(T)$. 
Therefore, if we classify all $\binom{|T|}{2}$ pairs appearing in $\cl^q(T)$ into three groups 
(those inside $L$, those inside $R$ and those in $L\times R$), we obtain
\[
 \cl^q(T)=\cl^q(L)+\cl^q(R)+\bp^q(T) \leq (2^q + 1)\cdot \bp^q(T).
\]
This completes the proof of \eqref{eq:relation2}.
\end{proof}

\begin{proof}[Proof of Lemma~\ref{lem:roundedset}]
 The proof for remote-clique is straightforward:
 \begin{align*}
  |\cl^q(T) - \cl^q(\hat{\pi}(T))| &\leq \sum_{\{u,v\}\in \binom{T}{2} } |d^q(u,v) - d^q(\pi(u), \pi(v))| \\
			    & \leq 2^{q+1} \delta \bigg[\binom{k}{2}\Delta + \sum_{\{u,v\}\in \binom{T}{2} } d^q (u,v) \bigg] & \text{(by Lemma~\ref{lem:mapping})} \\
			    & = 2^{q+1} \delta \big[ \cl^q(\opt)+\cl^q(T) \big]. & \text{(by def. of $\Delta=\Delta_{\cl^q}$)} 
 \end{align*}

 We pass to remote-star.  If $\st^q(T)\leq \st^q(\hat{\pi}(T))$, let $z$ be the center of the min-weight spanning star in $T$ so that $\st^q(T)=\sum_{u\in T\setminus \{z\}} d^q(z,u)$ 
 and $\st^q(\hat{\pi}(T))\leq \sum_{u\in T\setminus \{z\}} d^q(\pi(z),\pi(u))$. Hence, 
 \begin{align*}
  \st^q(\hat{\pi}(T)) - \st^q(T) & \leq \sum_{u\in T\setminus\{z\}} \big[ d^q(\pi(z), \pi(u)) - d^q(z,u) \big] & \text{(by our choice of $z$)}\\
			    & \leq 2^{q+1} \delta \bigg[ (k-1)\Delta + \sum_{u\in T \setminus \{z\}} d^q(z,u)  \bigg] & \text{(by Lemma~\ref{lem:mapping})} \\
			    &= 2^{q+1} \delta \big[ \st^q(\opt) + \st^q(T) \big]. & \text{(by def. of $\Delta=\Delta_{\st^q}$)}
 \end{align*}
 If $\st^q(\hat{\pi}(T)) < \st^q(T)$, let $z\in T$ be a point so that $\pi(z)$ is the center of the min-weight spanning in $\hat{\pi}(T)$. 
 Then, $\st^q(\hat{\pi}(T))= \sum_{u\in T\setminus \{z\}} d^q(\pi(z),\pi(u))$ and $\st^q(T)\leq \sum_{u\in T\setminus \{z\}} d^q(z,u)$ and 
 \begin{align*}
  \st^q(T) - \st^q(\hat{\pi}(T))  & \leq \sum_{u\in T\setminus\{z\}} \big[ d^q(z,u) - d^q(\pi(z), \pi(u)) \big] & \text{(by our choice of $z$)}\\
			    & \leq 2^{q+1} \delta \bigg[ (k-1)\Delta + \sum_{u\in T \setminus \{z\}} d^q(\pi(z),\pi(u))  \bigg] & \text{(by Lemma~\ref{lem:mapping})} \\
			    &= 2^{q+1} \delta \big[ \st^q(\opt) + \st^q(\hat{\pi}(T)) \big] & \text{(by def. of $\Delta=\Delta_{\st^q}$)} \\
			    &\leq 2^{q+1} \delta \big[ \st^q(\opt) + \st^q(T) \big].
 \end{align*}

We finally pass to remote-bipartition. If $\bp^q(T)\leq \bp^q(\hat{\pi}(T))$, let $T=L\cup R$ be the min-weight bipartition of $T$ so that 
$\bp^q(T)=\sum_{\ell\in L, r\in R} d^q(\ell,r)$, and $\bp^q(\hat{\pi}(T))\leq \sum_{\ell\in L, r\in R} d^q(\pi(\ell), \pi(r))$. Then,
\begin{align*}
 \bp^q(\hat{\pi}(T)) - \bp^q(T)   & \leq \sum_{\ell\in L, r\in R} \big[ d^q(\pi(\ell), \pi(r)) - d^q(\ell, r) \big] & \text{(by our choice of $L,R$)}\\
			    & \leq 2^{q+1} \delta \bigg[ (k^2/4)\Delta+\sum_{\ell\in L, r\in R} d^q(\ell,r) \bigg] & \text{(by Lemma~\ref{lem:mapping})} \\
			    &=2^{q+1} \delta \big[ \st^q(\opt)+ \bp^q(T) \big]. & \text{(by def. of $\Delta=\Delta_{\bp^q}$)}
\end{align*}
The proof for the case $\bp^q(\hat{\pi}(T))< \bp^q(T)$ is similar.
\end{proof}

\begin{proof} [Proof of Lemma~\ref{lem:centroid}]
For a fixed $k$-set $T\subseteq \s^{D-1}$, let $f_T(x):=\sum_{u\in T} d^2(x,u)$ for $x\in \R^D$.
\begin{align}
 f_T(x) &= \sum_{u\in T} \|x-u\|^2 = \sum_{u\in T} \|(x-z_T) + (z_T - u)\|^2 \nonumber \\
	&= \sum_{u\in T} \left[ \|x-z_T\|^2 + \|z_T - u\|^2 + 2(x-z_T)^\intercal (z_T - u) \right] \nonumber \\
	& = k\cdot d^2(x, z_T) + f_T(z_T) + 2(x-z_T)^\intercal \Big(k\cdot z_T - \sum_{u\in T} u \Big) \nonumber \\
	& = k\cdot d^2(x, z_T) + f_T(z_T), \label{eq:centroid1} 
\end{align}
where the last term on the right-hand side vanishes, because the expression in parenthesis is zero by definition of centroid. 
Next, we use the identity $\cl^2(T) = \frac{1}{2} \sum_{u\in T} \sum_{v\in T} d^2(u,v) = \frac{1}{2}\sum_{u\in T} f_T(u)$ and \eqref{eq:centroid1}. We obtain 
\begin{align}
 \cl^2(T) = \frac{1}{2} \sum_{u\in T} \left[ k\cdot d^2(u, z_T) + f_T(z_T) \right] = \frac{1}{2} \left[ k\cdot f_T(z_T) + k \cdot f_T(z_T) \right] = k \cdot f_T(z_T). \label{eq:centroid2}
\end{align}
Finally, if we take $x=0$, then $f_T(0)=k$ because $T\subseteq \s^2$. From~\eqref{eq:centroid1}, we obtain 
\[
 f_T(z_T) = f_T(0) - k\cdot d^2(0,z_T) = k\cdot \big( 1 - \|z_T\|^2  \big),
\]
and hence~\eqref{eq:centroid2} gives $\cl^2(T) = k^2\cdot \big( 1 - \|z_T\|^2  \big)$, as claimed.
\end{proof}

\bibliographystyle{plain}
\small
\bibliography{lit}

\begin{thebibliography}{10}

\bibitem{abbassi2013diversity}
Z.~Abbassi, V.~S. Mirrokni, and M.~Thakur.
\newblock Diversity maximization under matroid constraints.
\newblock In {\em 19th Conference on Knowledge Discovery and Data Mining
  (SIGKDD)}, pages 32--40. ACM, 2013.

\bibitem{aghamolaei2015diversity}
S.~Aghamolaei, M.~Farhadi, and H.~Zarrabi-Zadeh.
\newblock Diversity maximization via composable coresets.
\newblock In {\em 27th Canadian Conference on Computational Geometry (CCCG)},
  page~43, 2015.

\bibitem{alon2011inapproximability}
N.~Alon, S.~Arora, R.~Manokaran, D.~Moshkovitz, and O.~Weinstein.
\newblock Inapproximability of densest $\kappa$-subgraph from average case
  hardness.
\newblock {\em Unpublished manuscript}, 2011.

\bibitem{Bartal_2016_dimension}
A.~Bartal and L.~A. Gottlieb.
\newblock Dimension reduction techniques for $\ell_p$, $1\leq p\leq 2$, with
  applications.
\newblock In {\em Proceedings of the 32nd Symposium on Computational Geometry
  (SoCG)}, pages 16:1--16:15, 2016.

\bibitem{Bartal:2011:DRB:2133036.2133104}
Y.~Bartal, B.~Recht, and L.~J. Schulman.
\newblock Dimensionality reduction: Beyond the {Johnson-Lindenstrauss} bound.
\newblock In {\em Proceedings of the Twenty-second Annual ACM-SIAM Symposium on
  Discrete Algorithms (SODA)}, pages 868--887, 2011.

\bibitem{bhaskara2016linear}
A.~Bhaskara, M.~Ghadiri, V.~Mirrokni, and O.~Svensson.
\newblock Linear relaxations for finding diverse elements in metric spaces.
\newblock In {\em Advances in Neural Information Processing Systems}, pages
  4098--4106, 2016.

\bibitem{birnbaum2009improved}
B.~Birnbaum and K.~J. Goldman.
\newblock An improved analysis for a greedy remote-clique algorithm using
  factor-revealing {LP}s.
\newblock {\em Algorithmica}, 55(1):42--59, 2009.

\bibitem{borodin2012max}
A.~Borodin, H.~C. Lee, and Y.~Ye.
\newblock Max-sum diversification, monotone submodular functions and dynamic
  updates.
\newblock In {\em Proceedings of the 31st Symposium on Principles of Database
  Systems}, pages 155--166, 2012.

\bibitem{ceccarello2017mapreduce}
M.~Ceccarello, A.~Pietracaprina, G.~Pucci, and E.~Upfal.
\newblock Mapreduce and streaming algorithms for diversity maximization in
  metric spaces of bounded doubling dimension.
\newblock {\em Proceedings of the VLDB Endowment}, 10(5):469--480, 2017.

\bibitem{cevallos_2016_max-sum}
A.~Cevallos, F.~Eisenbrand, and R.~Zenklusen.
\newblock Max-sum diversity via convex programming.
\newblock In {\em 32nd Annual Symposium on Computational Geometry (SoCG)},
  pages 26:1--26:14, 2016.

\bibitem{cevallos2017local}
A.~Cevallos, F.~Eisenbrand, and R.~Zenklusen.
\newblock Local search for max-sum diversification.
\newblock In {\em 28th Symposium on Discrete Algorithms (SODA)}, pages
  130--142. SIAM, 2017.

\bibitem{chandra2001approximation}
B.~Chandra and M.~M. Halld{\'o}rsson.
\newblock Approximation algorithms for dispersion problems.
\newblock {\em Journal of algorithms}, 38(2):438--465, 2001.

\bibitem{cohen2016local}
V.~Cohen-Addad, P.~N. Klein, and C.~Mathieu.
\newblock Local search yields approximation schemes for $k$-means and
  $k$-median in {Euclidean} and minor-free metrics.
\newblock In {\em 57th Annual Symposium on Foundations of Computer Science
  (FOCS)}, pages 353--364. IEEE, 2016.

\bibitem{cunningham2015linear}
J.~P. Cunningham and Z.~Ghahramani.
\newblock Linear dimensionality reduction: survey, insights, and
  generalizations.
\newblock {\em Journal of Machine Learning Research}, 16(1):2859--2900, 2015.

\bibitem{dasgupta2008random}
S.~Dasgupta and Y.~Freund.
\newblock Random projection trees and low dimensional manifolds.
\newblock In {\em Proceedings of the 40th Symposium on Theory of Computing},
  pages 537--546. ACM, 2008.

\bibitem{fekete2004maximum}
S.~P. Fekete and H.~Meijer.
\newblock Maximum dispersion and geometric maximum weight cliques.
\newblock {\em Algorithmica}, 38(3):501--511, 2004.

\bibitem{fernandez2002polynomial}
W.~Fernandez de~la Vega, M.~Karpinski, and C.~Kenyon.
\newblock A polynomial time approximation scheme for metric {MIN-BISECTION}.
\newblock {\em Electronic Colloquium on Computational Complexity (ECCC)}, pages
  1--12, 2002.

\bibitem{friggstad2016local}
Z.~Friggstad, M.~Rezapour, and M.~R. Salavatipour.
\newblock Local search yields a {PTAS} for $k$-means in doubling metrics.
\newblock In {\em 57th Annual Symposium on Foundations of Computer Science
  (FOCS)}, pages 365--374. IEEE, 2016.

\bibitem{gollapudi2009axiomatic}
S.~Gollapudi and A.~Sharma.
\newblock An axiomatic approach for result diversification.
\newblock In {\em 18th International Conference on World Wide Web (WWW)}, pages
  381--390. ACM, 2009.

\bibitem{gottlieb2015nonlinear}
L.~A. Gottlieb and R.~Krauthgamer.
\newblock A nonlinear approach to dimension reduction.
\newblock {\em Discrete \& Computational Geometry}, 54(2):291--315, 2015.

\bibitem{har2011geometric}
S.~Har-Peled.
\newblock {\em Geometric approximation algorithms}, volume 173.
\newblock American mathematical society Boston, 2011.

\bibitem{hassin1997approximation}
R.~Hassin, S.~Rubinstein, and A.~Tamir.
\newblock Approximation algorithms for maximum dispersion.
\newblock {\em Operations Research Letters}, 21(3):133--137, 1997.

\bibitem{indyk2014composable}
P.~Indyk, S.~Mahabadi, M.~Mahdian, and V.~S. Mirrokni.
\newblock Composable core-sets for diversity and coverage maximization.
\newblock In {\em 33rd ACM Symposium on Principles of Database Systems}, pages
  100--108, 2014.

\bibitem{indyk2007nearest}
P.~Indyk and A.~Naor.
\newblock Nearest-neighbor-preserving embeddings.
\newblock {\em ACM Transactions on Algorithms (TALG)}, 3(3):31, 2007.

\bibitem{qin2012diversifying}
L.~Qin, J.~X. Yu, and L.~Chang.
\newblock Diversifying top-$k$ results.
\newblock {\em Proceedings of the VLDB Endowment}, 5(11):1124--1135, 2012.

\bibitem{radlinski2006improving}
F.~Radlinski and S.~Dumais.
\newblock Improving personalized web search using result diversification.
\newblock In {\em 29th SIGIR Conference on Research and Development in
  Information Retrieval}, pages 691--692. ACM, 2006.

\bibitem{ravi1994heuristic}
S.~S. Ravi, D.~J. Rosenkrantz, and G.~K. Tayi.
\newblock Heuristic and special case algorithms for dispersion problems.
\newblock {\em Operations Research}, 42(2):299--310, 1994.

\bibitem{tenenbaum2000global}
J.~B. Tenenbaum, V.~de~Silva, and J.~C Langford.
\newblock A global geometric framework for nonlinear dimensionality reduction.
\newblock {\em Science}, 290(5500):2319--2323, 2000.

\bibitem{vasconcelos2003feature}
N.~Vasconcelos.
\newblock Feature selection by maximum marginal diversity.
\newblock In {\em Advances in Neural Information Processing Systems}, pages
  1375--1382, 2003.

\bibitem{vieira2011query}
M.~R. Vieira, H.~L. Razente, M.~C.~N. Barioni, M.~Hadjieleftheriou,
  D.~Srivastava, C.~Traina, and V.~J. Tsotras.
\newblock On query result diversification.
\newblock In {\em 27th International Conference on Data Engineering (ICDE)},
  pages 1163--1174. IEEE, 2011.

\bibitem{wang1988study}
D.W. Wang and Y.S. Kuo.
\newblock A study on two geometric location problems.
\newblock {\em Information processing letters}, 28(6):281--286, 1988.

\end{thebibliography}

\end{document}